\renewenvironment{abstract}
{\small
\begin{center}
\bfseries\abstractname\vspace{-.5em}\vspace{0pt}
\end{center}
\list{}{%
\setlength{\leftmargin}{0.75cm}%
\setlength{\rightmargin}{\leftmargin}%
}%
\item\relax}
{\endlist}
\newtheorem{thm}{Theorem}[section]
\newtheorem{lem}{Lemma}[section]
\newtheorem{prop}{Proposition}[section]
\newtheorem{rem}{Remark}[section]
\newtheorem{defi}{Definition}[section]
\numberwithin{equation}{section}
\title{Rapid stabilization of the heat equation\\ with localized disturbance}
\author{Christian Calle \footnotemark[1]\hspace{0.275cm} Patricio Guzm\'an \footnotemark[1]\hspace{0.275cm} and Hugo Parada \footnotemark[2]\hspace{0.275cm}}
\begin{document}
\setlength{\abovedisplayskip}{0pt}
\renewcommand\qedsymbol{$\blacksquare$}   

\footnotetext[1]{Departamento de Matem\'atica, Universidad T\'ecnica Federico Santa Mar\'ia, Valpara\'iso, Chile.\\
E-mail: {\tt patricio.guzmanm@usm.cl, ccalle@usm.cl  (corresponding author)}}
\footnotetext[2]{Universit\'e de Lorraine, CNRS, Inria, IECL, F-54000 Nancy, France.\\
E-mail: {\tt hugo.parada@inria.fr}}

\date{}

\maketitle

\begin{abstract}
This paper studies the rapid stabilization of a multidimensional heat equation in the presence of an unknown spatially localized disturbance. A novel multivalued feedback control strategy is proposed, which synthesizes the frequency Lyapunov method (introduced by Xiang \cite{X2024}) with the sign multivalued operator. This methodology connects Lyapunov-based stability analysis with spectral inequalities, while the inclusion of the sign operator ensures robustness against the disturbance. The closed-loop system is governed by a differential inclusion, for which well-posedness is proved via the theory of maximal monotone operators. This approach not only guarantees exponential stabilization but also circumvents the need for explicit disturbance modeling or estimation.
\end{abstract}

\vspace{0.3cm}

\textbf{2020 Mathematics Subject Classification:} 35B40, 34G20, 35K05, 93D15, 93D23.

\vspace{0.5cm}

\textbf{Keywords:} Heat equation, locally distributed control,  disturbance, feedback stabilization, Lyapunov techniques, exponential stability.



\section{Introduction}

\hspace{0.75cm}Partial differential equations (PDEs) play a central role in the mathematical modeling of a wide range of physical phenomena, such as heat distribution in solids and fluids, wave propagation, and the lateral deflection of strings and beams. Once a model is formulated, a fundamental objective in control theory is to design feedback laws that stabilize the system’s state, either toward an equilibrium or a desired trajectory. The traditional stabilization analysis often proceeds under idealized assumptions, with the absence of external disturbances. However, in practical applications, systems are subject to disturbances arising from unmodeled fast dynamics, parameter uncertainties, or fluctuating environmental loads, for instance. Those disturbances can affect the stability of the system. Similarly, control designs that require actuation across the entire spatial domain are often physically unrealizable. Instead, controls might be applied through a specific subregion of the domain or on a section of its boundary. Consequently, a central and challenging objective in PDE control is to develop stabilization strategies that are both robust to disturbances and spatially localized in their actuation.


Stabilization problems for one-dimensional PDEs, such as controlling a string or a rod, are well-studied. There are numerous results for various boundary conditions and actuator configurations, where it is common to employ powerful methods like backstepping, which has proven effective for many one-dimensional models \cite{ccg2023,cc2013,cl2014,cl2015,glm2021,l2003,rbcps2018,sck2010,sgk2009,vabk2024}. However, it remains a challenging open problem to introduce the backstepping method to general multi-dimensional models. In contrast, controlling multi-dimensional PDEs, such as the heat equation or the wave equation, presents profoundly greater analytical and geometrical challenges. Control design becomes more difficult due to the complex interplay between the system's geometry, the spectrum of its spatial differential operator, and the actuator's location.


This work focuses on the multi-dimensional heat equation, which is a well-studied subject in the control theory. It is important to consider the seminal works of Lions~\cite{lions1971,lions1988}, who established a functional analysis framework for controllability and stabilization of distributed parameter systems. A step forward was achieved by Triggiani~\cite{triggiani1975,triggiani1980}, who proved abstract stabilizability results and boundary feedback stabilization for parabolic systems, via compact resolvent and spectral arguments. Subsequently, the works of Barbu and Triggiani~\cite{bt2004} and by Barbu and Wang~\cite{barbuwang2003} extended this theory to nonlinear and semilinear parabolic systems, establishing internal stabilization by finite-dimensional controllers. Breiten and Kunisch~\cite{breiten2014} proposed a Riccati feedback framework for reaction–diffusion systems arising in cardiac electrophysiology, illustrating the robustness of such designs in multi-dimensional domains. Recently, Badra and Takahashi~\cite{badra2014} presented the Fattorini criterion for approximate controllability and stabilization of parabolic systems, providing refined spectral characterizations.


In the multi-dimensional context, numerous studies have addressed systems subject to disturbances and developed diverse methods to counteract such disturbances, each with distinct advantages and applicability conditions. The choice of method depends on the nature of the disturbance considered and its structural relationship to the control input. When disturbance is a constant, it can be followed by traditional and well-known methods, which include but are not limited to the Spectral (Pole Placement) Method, the Riccati-Based Method, and the Backstepping Method (one-dimensional case) \cite{C2007,ks2008}. When the disturbance is not constant and time-dependent is more general and of significant interest. When the disturbance enters the system through the same channel as the control input (a \textit{matched} condition), robust methods like Sliding Mode Control (SMC) are highly effective. SMC drives the system trajectory onto a predetermined sliding manifold in finite time, inducing invariance to a class of matched disturbances \cite{bmp2025,gj2013,gl2014}. To handle the general and challenging case of bounded, unmatched disturbances, Active Disturbance Rejection Control (ADRC) offers a powerful solution. Its core principle is to treat all unknown dynamics and disturbances as a \textit{total disturbance}. This quantity is estimated in real-time by an Extended State Observer (ESO) and actively canceled by the control law \cite{zgp2025,zj2017}. This provides robustness without requiring a precise model of the disturbance itself.


It can also be mentioned the recent work of Balogoun, Marx, and Plestan~\cite{balogoun2025}, where well-posedness and global stabilization results for infinite-dimensional systems subject to disturbances and admissible control operators, using SMC, were established. Their control design relies on the sliding variable $\sigma(t) = \langle \phi, z(t) \rangle_H$, where $\phi$ is an eigenfunction of the adjoint operator $A_L^*$ associated with the closed-loop generator $A_L = A + BL$. In our setting, $A$ denotes the Laplacian and $B$ represents a localized internal control operator, we find the assumptions of~\cite{balogoun2025} are satisfied. However, the explicit computation of such eigenfunctions is considerably more involved. In contrast, our approach relies solely on the spectral properties of the Laplacian, thus avoiding the construction of $A_L$ and leading to a direct variational formulation of the feedback law. In the recent paper due to Labbadi and Roman~\cite{lr2025}, they achieved finite- and fixed-time stabilization by means of set-valued feedback of maximal monotone type. Although the localized control operator proposed here satisfies the same structural assumptions, their feedback involves additional nonlinear power terms. In contrast, the present method employs a simpler monotone law that preserves the natural dissipativity of the Laplacian. Finally, a relevant approach is found in Xiang~\cite{X2024}, where a localized finite-dimensional stabilizer for the multi-dimensional heat equation was built, via the Frequency Lyapunov method, a localized finite-dimensional stabilizer for the multi-dimensional heat equation using spectral arguments and well-chosen Lyapunov functions. That work is the starting point of our study.


The contribution of the present work is a novel control framework for the robust stabilization of a multi-dimensional heat equation subject to general bounded disturbances. It's designed a feedback control law that acts only on an arbitrarily small subdomain and incorporates a disturbance rejection mechanism to achieve robustness against a broad class of time- and space-varying unknown disturbances, ensuring a decay rate as large as desired. Providing a rigorous stability analysis for the resulting closed-loop system, establishing the well-posedness and exponential stability of the state to the desired equilibrium.  The stabilization of the multi-dimensional heat equation with localized disturbance has been previously studied in \cite{zj2017}, where only asymptotic stability via the ADRC approach was considered. While they get the eventual decay of the system's energy, the rate of convergence remains undetermined and potentially slow. In contrast, the present work yields a significantly stronger result, the exponential stability. Specifically, we prove the existence of constants $C\geq1$, such that for every $\lambda>0$ the system's energy $E(t)$ satisfies

\begin{equation*}
E(t) \leq C e^{-\lambda t} E(0), \quad \forall t>0 .    
\end{equation*}
\noindent This represents a qualitative improvement over asymptotic decay, as it provides a robust and fast rate of stabilization.


The remainder of this paper is organized as follows: Section \ref{Sec_P_S} formalizes the problem statement and presents the necessary mathematical preliminaries. Section \ref{Sec_Spec_Ineq} is devoted to the main tool of the present work, the Spectral inequality. The main stability theorem and its detailed proof are given in Section  \ref{Sec_Feedback}.  Section \ref{Sec_well_p} presents the well-posedness of the resulting closed-loop system via maximal monotone operator theory. Finally, Section 6 offers concluding remarks and directions for future research.



\section{Problem Statement}\label{Sec_P_S}
In the following, we present a precise formulation of the stabilization problem considered in this article.  Let $\Omega \subset \mathbb{R}^n$ ($n \in \mathbb{N}$)  be an open domain with smooth boundary $\partial \Omega$. Let $\omega \subset \Omega$ be a nonempty open subset of positive Lebesgue measure (i.e., $|\omega| > 0$). In this article, we focus our interest on a multi-dimensional heat equation controlled and perturbed on a subdomain:

\begin{equation}\label{P}\tag{P}
  \left\{\begin{array}{ll}
y_t- \Delta y=\chi_\omega(u+d),\quad&(t, x) \in(0, \infty) \times \Omega \\
y(t, x)=0,&(t, x) \in(0, \infty) \times \partial \Omega \\
y(0, x)=y_0(x), &x \in \Omega
\end{array}\right.
\end{equation}

\noindent where $\chi_\omega$ denotes the characteristic function on $\omega$, that is to say, $\chi_\omega(x)=1$ if $x \in \omega$ and $\chi_\omega(x)=0$ if $x \notin \omega$. The aim is to achieve exponential stabilization of system \eqref{P} by employing a distributed feedback control law $ u = u(t, x) $ that acts only on the interior subdomain $ \omega \subset \Omega $. The control must simultaneously suppress the effects of an unknown distributed disturbance $ d = d(t, x) $.

\noindent \noindent Regarding the undisturbed case $(d=0)$ the problem under consideration has been solved in Xiang \cite{X2024}.

\noindent The stabilization problem for partial differential equations subjected to unknown disturbances, acting either in the domain or at the boundary, has been object of recent interest. In Table \ref{literature} we present, without being exhaustive, some of the concerned literature.

\vspace{0.25cm}

\begin{table}[H]
\centering
\begin{tabular}{c|c|c|c}
Equation & Distributed disturbance & Boundary disturbance & Multidimensional \\ 
\hline
Heat & \cite{zgp2025}& \cite{hcp2020,gh2023,gh2025,zgp2025} &\cite{gp2020,X2024,zj2017,zgp2025} \\
Wave & \cite{fx2015,opu2011} & \cite{gj2013,gk2012,mgk2019,m2022,zw2018} &\cite{gz2024,h1989,lt1992,t1998,vfp2023}
\\
Beam & \cite{a2023,g2018} & \cite{gk2014} & --\\
\end{tabular}
\caption{\label{literature}Stabilization of partial differential equations subjected to unknown disturbances.}
\label{tl}
\end{table}

\vspace{0.25cm}



\noindent Throughout this work, we adopt standard notation for Sobolev and Hilbert spaces. We denote by $L^2(\Omega)$ the usual Hilbert space of square-integrable functions on $\Omega$, equipped with the inner product

\begin{equation*}
(u, v)_{L^2(\Omega)} = \int_\Omega u\, v\, ~dx,
\quad \text{and norm} \quad
\| u \|_{L^2(\Omega)}^2 = (u,u)_{L^2(\Omega)}.    
\end{equation*}

\noindent For $k \in \mathbb{N}$, the Sobolev space $H^k(\Omega)$ denotes the space of functions in $L^2(\Omega)$ with weak derivatives up to order $k$ in $L^2(\Omega)$, endowed with the norm

\begin{equation*}
\| u \|_{H^k(\Omega)}^2 
= \sum_{|\alpha| \leq k} \| D^\alpha u \|_{L^2(\Omega)}^2.    
\end{equation*}

\noindent We write $H_0^1(\Omega)$ for the subspace of $H^1(\Omega)$ consisting of functions with zero trace on $\partial \Omega$.

\noindent We introduce the second-order elliptic operator $\mathscr{A}$ given by

\begin{equation}\label{Op}
    \begin{cases}
\mathscr{A}: D(\mathscr{A}) \subset L^{2}(\Omega) \rightarrow L^{2}(\Omega)  \tag{Op}\\
D(\mathscr{A})=\left\{\phi \in H_{0}^{1}(\Omega)~ /~ \Delta \phi \in L^{2}(\Omega)\right\} \\
\mathscr{A} \phi= -\Delta \phi .
\end{cases}
\end{equation}

\noindent We note that $\mathscr{A}$ is selfadjoint and has compact resolvent. Hence, the spectrum of $\mathscr{A}$ consists of only isolated eigenvalues with finite multiplicity. Furthermore, there exists a Hilbert orthogonal basis $\left\{e_i\right\}_{i\in \mathbb{N}}$ of $D(\mathscr{A})$ consisting of eigenfunctions of $\mathscr{A}$, associated with the sequence of eigenvalues $\left\{\tau_i\right\}_{i \in \mathbb{N}}$. Note that

\begin{equation}\label{decomp_espec}
    \begin{aligned}
   & 0<\tau_1 \leq \tau_2 \leq \tau_3 \leq \ldots \leq \tau_i \leq \ldots <+\infty\quad \text { and } \quad \tau_i \underset{i \rightarrow+\infty}{\longrightarrow}\infty.\\
&-\Delta e_i=\tau_i e_i \text { with }\left.e_i\right|_{\partial \Omega}=0.
\end{aligned}
\end{equation}

\noindent Different eigenvalues $\tau_i$ may coincide, but each eigenvalue only has finite algebraic multiplicity.

\noindent Given $\lambda > 0$, let $N(\lambda) := \# \{\, i \in \mathbb{N} : \tau_i \leq \lambda \},$ i.e., $\tau_{N(\lambda)}\leq\lambda<\tau_{N(\lambda)+1}$. Then Weyl’s law \cite{weyl} gives $N(\lambda) \sim (2\pi)^{-n} \omega_n\, \operatorname{vol}(\Omega)\, \lambda^{n/2}$, where $\omega_n$ is the volume of the unit ball in $\mathbb{R}^n$. Henceforth, for the sake of notational convenience, we shall denote $N(\lambda)$ simply by $N$. 

\noindent For any $y \in L^2(\Omega)$, its eigenfunction expansion reads

\begin{equation*}
 y(x) = \sum_{i=1}^\infty y_i e_i(x), 
\quad 
y_i := (y, e_i)_{L^2(\Omega)}.   
\end{equation*}

\noindent  We define the orthogonal projection onto the span of the first $N$ eigenfunctions by $P_N y:= \displaystyle\sum_{i=1}^N y_i e_i,$ and $P_N^\perp$ the co-projection.

\noindent In this setting, both the control input and the disturbance are expanded in the eigenfunction basis. Specifically, the control and disturbance terms localized to the control region $\omega$ take the form

\begin{equation*}
\chi_\omega\, d(t,x) = \chi_\omega \sum_{i=1}^{\infty} d_i(t)\, e_i(x), 
\quad
\chi_\omega\, u(t,x) =\chi_\omega \sum_{i=1}^{\infty} u_i(t)\, e_i(x).    
\end{equation*}



\noindent Although the disturbance is assumed to be unknown, we ask it to satisfy the following two assumptions, which are the standard ones that can be found in the literature.


\begin{enumerate}[label=\textbf{(A\arabic*)}]
    \item\label{A_1} $d \in L^{1}\left(0, \infty ; L^{2}(\Omega)\right)$.
    \item\label{A_2} There exists $D \in(0, \infty)$ such that $\|d(t, \cdot)\|_{L^{2}(\Omega)} \leq D$, for every $t\in[0,\infty)$.
\end{enumerate}

\noindent In order to reject the effects of the disturbance, we use the sign multivalued operator in a Hilbert space ${H}$, $\operatorname{sign}_{{H}}(\cdot): {H} \rightarrow 2^{{H}}\left(2^{H}\right.$ denotes the power set of $\left.H\right)$, given by

\begin{equation}\label{sign}
\operatorname{sign}_{{H}}(f)=\begin{cases}
\dfrac{f}{\|f\|_{H}},& \text { if } f \neq 0  \tag{sign}\\[2ex]
\left\{g \in {H} /\|g\|_{{H}} \leq 1\right\} ,&\text { if } f=0.
\end{cases}  
\end{equation}

\noindent To that end, we employ the property of the multivalued sign operator in the Hilbert space $L^{2}(\Omega)$.

\begin{equation*}
\int_{\Omega} \theta f ~d x=\|f\|_{L^2(\Omega)}, \quad\forall f \in L^2(\Omega),  \quad\forall \theta \in \operatorname{sign}_{L^2(\Omega)}(f) .    
\end{equation*}

\noindent  Our main result is the following one:

\begin{thm}\label{main_theorem}
Let us assume \ref{A_1} and \ref{A_2}. Let $y_{0}$ in $L^{2}(\Omega)$ be the initial condition. Let $\lambda$ in $(0, \infty)$ be the desired decay rate. Then, there exists a  feedback law $\mathscr{G_\lambda}: L^{2}(\Omega) \rightarrow$ $L^{2}(\Omega)$ such that \eqref{P} is exponentially stable in $L^{2}(\Omega)$, with decay rate $\lambda$. Being more explicit, \eqref{P} with the feedback law $u=\mathscr{G_\lambda}$ has a unique weak solution $y$ in $C\left([0, \infty) ; L^{2}(\Omega)\right)$, and it satisfies

\begin{equation}\label{eq:decay_solution}
\|y(t, \cdot)\|_{L^{2}(\Omega)} \leq e^{-\lambda t}\left\|y_{0}\right\|_{L^{2}(\Omega)}, t \in[0, \infty) 
\end{equation}

\begin{rem}
Assumption \ref{A_1} is required for the well-posedness part of Theorem 1, and assumption \ref{A_2} is needed for the construction of the feedback law.    
\end{rem} 

\end{thm}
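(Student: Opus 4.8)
The plan is to build the feedback in two superposed pieces, $\mathscr{G}_\lambda(y)=\mathscr{G}_\lambda^{s}(y)+\mathscr{G}_\lambda^{d}(y)$: a stabilizing part $\mathscr{G}_\lambda^{s}$ that forces the prescribed decay on the low-frequency block, and a disturbance-rejection part $\mathscr{G}_\lambda^{d}(y)=-D\,\operatorname{sign}_{L^2(\Omega)}(\chi_\omega y)$ that neutralizes $d$. Because the sign operator is multivalued, the closed loop is a differential inclusion, so the solution is understood in that sense. I would take as Lyapunov functional the plain energy $V(y)=\tfrac12\|y\|_{L^2(\Omega)}^2$ and aim at the differential inequality $\dot V\le-2\lambda V$, which integrates directly to \eqref{eq:decay_solution} with the sharp constant $1$. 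Differentiating formally along trajectories, $\dot V=(y_t,y)=-\|\nabla y\|_{L^2(\Omega)}^2+(\chi_\omega(\mathscr{G}_\lambda(y)+d),y)$.

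The disturbance term is the clean part of the argument, and it is dispatched by the sign operator. Picking $\theta\in\operatorname{sign}_{L^2(\Omega)}(\chi_\omega y)$ in the feedback and using the stated identity $(\theta,\chi_\omega y)=\|\chi_\omega y\|_{L^2(\Omega)}=\|y\|_{L^2(\omega)}$ together with Cauchy--Schwarz and \ref{A_2}, one gets $(\chi_\omega\mathscr{G}_\lambda^{d}(y),y)+(\chi_\omega d,y)\le-D\|y\|_{L^2(\omega)}+\|d\|_{L^2(\omega)}\|y\|_{L^2(\omega)}\le0$. Thus the disturbance is absorbed using only its bound $D$, with no estimation of $d$ itself.

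It then remains to produce $-\|\nabla y\|^2+(\chi_\omega\mathscr{G}_\lambda^{s}(y),y)\le-\lambda\|y\|^2$. Here I split $y=P_Ny+P_N^{\perp}y$ with $N=N(\lambda)$. The high modes are handled for free by diffusion, since $\tau_i>\lambda$ for $i>N$ yields $-\sum_{i>N}\tau_i y_i^2\le-\lambda\|P_N^{\perp}y\|^2$. The low modes, whose eigenvalues $\tau_i\le\lambda$ give insufficient dissipation, must receive the missing damping $\sum_{i\le N}(\lambda-\tau_i)y_i^2$ through an action supported in $\omega$. This is exactly where the Spectral inequality of Section~\ref{Sec_Spec_Ineq} is used: it makes the Gram operator $z\mapsto P_N(\chi_\omega z)$ coercive on $\operatorname{span}\{e_1,\dots,e_N\}$, so its inverse defines a localized feedback prescribing rate-$\lambda$ decay of $P_Ny$.

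The main obstacle I anticipate is the coupling between the two frequency blocks. The localized stabilizer unavoidably excites the high modes, producing a cross term $(\chi_\omega\mathscr{G}_\lambda^{s}(y),P_N^{\perp}y)$ whose size is governed by the large spectral-inequality constant, while the compensating high-mode dissipation is only of order $\tau_{N+1}$; calibrating $N$ and the feedback gain so that this coupling is dominated yet the full rate $\lambda$ is still delivered is the delicate quantitative core that the frequency Lyapunov strategy is meant to organize. The second point requiring care is well-posedness: I would recast the closed loop as $y_t+\mathcal{A}_\lambda y\ni\chi_\omega d$, where $\mathcal{A}_\lambda$ is built from $\mathscr{A}$, the (bounded, possibly indefinite) linear stabilizer, and the monotone sign term, so that $\mathcal{A}_\lambda+\beta I$ is maximal monotone for $\beta$ large; since the source lies in $L^1(0,\infty;L^2(\Omega))$ by \ref{A_1}, the theory of evolution equations governed by maximal monotone operators gives a unique weak solution in $C([0,\infty);L^2(\Omega))$. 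Verifying the range condition for the sum that includes the multivalued sign operator is the part of this step that must be checked with attention.
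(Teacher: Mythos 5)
There is a genuine gap, and it sits exactly at the point you flag as ``the delicate quantitative core'': with the plain energy $V(y)=\tfrac12\|y\|_{L^2(\Omega)}^2$ the low--high cross term cannot be absorbed, and no calibration of $N$ and the gain repairs this. Concretely, with $\mathscr{G}^s_\lambda(y)=-\gamma P_N y$ you get
\begin{equation*}
\dot V \le -\gamma C_\lambda\|P_N y\|^2-\tau_{N+1}\|P_N^\perp y\|^2+\tfrac{\gamma}{2\varepsilon}\|P_N y\|^2+\tfrac{\gamma\varepsilon}{2}\|P_N^\perp y\|^2 ,
\end{equation*}
and demanding rate $\lambda$ on both blocks forces (optimizing over $\varepsilon$ and $\gamma$) the condition $\tau_{N+1}\ge\lambda\left(1+C_\lambda^{-2}\right)$. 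Since the paper uses only the \emph{weak} spectral inequality, $C_\lambda$ carries no quantitative dependence on $N$, and enlarging $N$ to gain high-mode dissipation simultaneously shrinks $C_\lambda$, so this race cannot be won in general. The paper's resolution is not a refinement of your bookkeeping but a different Lyapunov function: the weighted functional $V(y)=\mu_\lambda\|P_N y\|^2+\|P_N^\perp y\|^2$ with $\gamma_\lambda=\lambda/C_\lambda$ and $\mu_\lambda=C_\lambda^{-2}$. The weight multiplies the low-mode damping ($-2\mu_\lambda\gamma_\lambda C_\lambda\|X_N\|^2$) but not the cross-term penalty ($+\gamma_\lambda^2\lambda^{-1}\|X_N\|^2$), which decouples the two constraints and closes the budget exactly at rate $\lambda$ for the fixed $N=N(\lambda)$. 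This is the content of the ``frequency Lyapunov'' method; treating it as an organizational convenience rather than the essential mechanism is where your argument breaks.

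The choice of Lyapunov function then propagates into your other two steps. Once $V$ is weighted, the disturbance appears in $\dot V$ paired with $\chi_\omega(\mu_\lambda P_N y+P_N^\perp y)=\chi_\omega\mathscr{C}y$, so the rejection term must be $-D\operatorname{sign}_{L^2(\Omega)}(\chi_\omega\mathscr{C}y)$, not $-D\operatorname{sign}_{L^2(\Omega)}(\chi_\omega y)$ as you propose; with your choice the sign identity produces $\|\chi_\omega y\|$ while the term to be dominated is $\|d\|\,\|\chi_\omega\mathscr{C}y\|$, and the cancellation fails. Likewise for well-posedness: the paper does not shift by $\beta I$ but instead proves that the full operator is already maximal monotone with respect to the equivalent inner product $(u,v)_\mu=\mu_\lambda(P_Nu,P_Nv)+(P_N^\perp u,P_N^\perp v)$, precisely because in that inner product the nonlinear part pairs as $\langle\eta_1-\eta_2,\chi_\omega\mathscr{C}y_1-\chi_\omega\mathscr{C}y_2\rangle\ge0$; the range condition $R(I+A)=L^2(\Omega)$ is then verified by Yosida regularization of the sign term plus a Schauder fixed point, which is the step you correctly identify as needing attention but do not supply.
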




\section{Spectral Inequality}\label{Sec_Spec_Ineq}

A crucial tool in the feedback design is the spectral inequality, which ensures that the modal energy in the controlled region is sufficiently observable. For instance, concerning the eigenfunctions $\left\{e_i\right\}_{i \in \mathbb{N}}$ one has the following result:
\begin{prop}
    The eigenfunctions $\left\{e_i\right\}_{i \in \mathbb{N}}$ satisfy
    \begin{enumerate}
        \item  Orthonormal basis: $\left(e_i, e_j\right)_{L^2(\Omega)}=\delta_{i j}$.
        \item (Unique continuation) The symmetric matrix $J_N$ given below is invertible \cite{bt2004},

\begin{equation*}
J_N:=\left(\left(e_i, e_j\right)_{L^2(\omega)}\right)_{i, j=1}^N  .  
\end{equation*}

        \item (Weak Spectral inequality) There exist $C_\lambda > 0$ such that for $Y_{N}=\left(a_1, \ldots, a_{N}\right) \in \mathbb{R}^N$, we have

\begin{equation}\label{spect_ineq}
 Y_{N}^{T} J_{N} Y_{N(x)}=\left\|\sum_{n=1}^{N} a_n e_n\right\|_{L^2(\omega)}^2 \geq C_\lambda \sum_{n=1}^{N} a_n^2 \text {. }   
\end{equation}
    \end{enumerate}

\end{prop}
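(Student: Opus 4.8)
The plan is to handle the three claims in sequence, observing that the third is essentially a reformulation of the positive definiteness that the first two establish. For the orthonormality claim I would invoke the spectral decomposition already recorded in \eqref{decomp_espec}: since $\mathscr{A}$ is selfadjoint with compact resolvent, eigenfunctions associated with distinct eigenvalues are automatically $L^2(\Omega)$-orthogonal, and within each finite-dimensional eigenspace one applies Gram--Schmidt and normalizes, yielding $(e_i,e_j)_{L^2(\Omega)}=\delta_{ij}$. This step is routine.

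The heart of the argument is the invertibility of $J_N$. I would first note that $J_N$ is precisely the Gram matrix of the family $\{e_1,\dots,e_N\}$ for the $L^2(\omega)$ inner product, so that for $Y_N=(a_1,\dots,a_N)^T$ one has the algebraic identity
\[
Y_N^T J_N Y_N = \sum_{i,j=1}^N a_i a_j (e_i,e_j)_{L^2(\omega)} = \Big\|\sum_{n=1}^N a_n e_n\Big\|_{L^2(\omega)}^2 \geq 0 ,
\]
so $J_N$ is symmetric positive semidefinite; it is invertible if and only if it is positive definite, which holds if and only if the restrictions $e_1|_\omega,\dots,e_N|_\omega$ are linearly independent in $L^2(\omega)$. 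Thus I reduce invertibility to the unique continuation statement: if $v:=\sum_{n=1}^N a_n e_n$ vanishes a.e.\ on $\omega$, then $v\equiv 0$ on $\Omega$.

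To prove this, I would group the sum by the distinct eigenvalues $\mu_1<\dots<\mu_k$ occurring among $\tau_1,\dots,\tau_N$, writing $v=\sum_{j=1}^k v_j$, where $v_j$ lies in the eigenspace of $\mu_j$ and hence solves $-\Delta v_j=\mu_j v_j$. By interior elliptic regularity $v$ is smooth on $\omega$, so $v\equiv 0$ there forces every polynomial in $\Delta$ applied to $v$ to vanish on $\omega$ as well; applying $\prod_{j\geq 2}(-\Delta-\mu_j)$ isolates the lowest mode, giving $\big(\prod_{j\geq2}(\mu_1-\mu_j)\big)v_1=0$ on $\omega$. Since the scalar prefactor is nonzero, $v_1$ vanishes on the open set $\omega$; as $v_1$ solves a single second-order elliptic equation with constant coefficients on the connected domain $\Omega$, the classical unique continuation principle (Aronszajn / Carleman estimates) forces $v_1\equiv0$ on $\Omega$. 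Iterating over $j$ yields $v_j\equiv0$ for all $j$, and since the $e_n$ are $L^2(\Omega)$-orthonormal by the first claim, every $a_n=0$, which is the desired linear independence.

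The third claim then follows immediately: being symmetric positive definite, $J_N$ has smallest eigenvalue $C_\lambda:=\lambda_{\min}(J_N)>0$, whence $Y_N^T J_N Y_N\geq C_\lambda\|Y_N\|^2=C_\lambda\sum_{n=1}^N a_n^2$, and the displayed equality is just the Gram-matrix identity above. The dependence on $\lambda$ enters only through $N=N(\lambda)$, so no quantitative lower bound on $C_\lambda$ is needed for this weak form. The main obstacle is the unique continuation step: one must ensure $\Omega$ is connected (or argue component by component) and that $v_1$ genuinely vanishes on an open set so that the elliptic unique continuation principle applies; the spectral grouping that reduces the finite combination to a single eigenfunction is the device that makes this classical theorem usable here.
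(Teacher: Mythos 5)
Your proposal is correct, and it reaches the same conclusion as the paper by a genuinely different route. The paper proves the weak spectral inequality \eqref{spect_ineq} by contradiction: it takes a sequence $a^m$ violating the inequality with constant $1/m$, normalizes on the unit sphere of $\mathbb{R}^N$, extracts a convergent subsequence by compactness, and obtains a unit vector $v$ with $\sum_{n=1}^N v_n e_n=0$ on $\omega$, which is then excluded by unique continuation (the invertibility of $J_N$ itself is simply cited from \cite{bt2004}). You instead argue directly: $J_N$ is the Gram matrix of $e_1|_\omega,\dots,e_N|_\omega$, hence positive semidefinite, and positive definiteness --- equivalently linear independence of the restrictions --- gives $C_\lambda=\lambda_{\min}(J_N)>0$. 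In finite dimensions the two arguments are essentially equivalent (the paper's compactness step is just the statement that a positive continuous quadratic form attains a positive minimum on the unit sphere), but your version buys two things. First, it identifies the optimal constant as the smallest eigenvalue of $J_N$. Second, and more substantively, you actually prove the unique continuation step rather than citing it: grouping the combination by distinct eigenvalues, applying $\prod_{j\ge 2}(-\Delta-\mu_j)$ to isolate a single eigenfunction vanishing on the open set $\omega$, and invoking Aronszajn-type unique continuation. This is cleaner than the paper's own appendix argument, whose final step (the uncountable set $\mathcal{M}$ of points where some $e_n$ does not vanish) does not by itself force $v_1=\dots=v_N=0$ and implicitly relies on exactly the linear-independence fact you establish. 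Your caveat about connectedness of $\Omega$ is the right one to flag; since $\Omega$ is assumed to be a domain, it is harmless here.
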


\begin{rem}
    The weak spectral inequality follows as a consequence of the unique continuation property.
\end{rem}

\begin{rem}
    For the Laplacian operator, a more precise spectral inequality - known as the \emph{quantitative spectral inequality} - was introduced by Lebeau and Robbiano \cite{lr1995}, and previously discussed in \cite{lin1991}. This inequality was later used by Xiang \cite{X2024} to establish quantitative rapid stabilization results. In this work, we employ the weak spectral inequality instead, as our focus is on rapid stabilization rather than its quantitative version. As a result, the constant in our spectral inequality does not depend on the decay rate parameter $\lambda$.
\end{rem}


\section{Feedback Design}\label{Sec_Feedback}


A main objective in controlling distributed parameter systems is the development of feedback laws which ensure the system state converges to a prescribed target while maintaining robustness in the presence of external disturbances. In this section, we prove our main result related with the exponential stabilization of the heat equation with localized control and bounded perturbation. We base our ideas on \cite{X2024}. We present the construction of such a feedback law for problem \eqref{P}, assuming suitable regularity of the solution. To clarify the main ideas underlying the feedback design, we decompose the control input $u$ as

\begin{equation*}
u = \tilde{u} + \hat{u},    
\end{equation*}

\noindent where the term $\tilde{u}$ is designed to achieve the desired decay rate of the state, while $\hat{u}$ will be designed separately to mitigate the impact of the disturbance, $d$, while preserving the stabilizing effect of $\tilde{u}$.

\noindent For completeness, we recall several useful identities that will be used in the subsequent analysis. First, the projection of the eigenfunction $e_j$ onto the control region is given by

\begin{equation*}
\chi_\omega e_j 
= \sum_{i=1}^{\infty} (\chi_\omega e_j,\, e_i)_{L^2(\Omega)}\, e_i 
= \sum_{i=1}^{\infty} (e_i,\, e_j)_{L^2(\omega)}\, e_i.  
\end{equation*}

\noindent The time derivative and Laplacian terms satisfy the standard orthogonality properties:

\begin{equation*}
\int_{\Omega} y_t\, e_j~ dx 
= \sum_{i=1}^{\infty} \frac{d}{dt} y_i\, \int_{\Omega} e_i\, e_j~ dx 
= \frac{d}{dt} y_j,    
\end{equation*}

\begin{equation*}
\begin{aligned}
    \int_{\Omega} \Delta y e_{j} ~d x&=\int_{\Omega} \sum_{i=1}^{\infty} y_{i}\left(\Delta e_{i}, e_{j}\right)~ d x  =-\sum_{i=1}^{\infty} y_{i} \tau_{i} \int_{\Omega} e_{i} e_{j} ~d x 
& =-\tau_{j} y_{j} ,
\end{aligned}    
\end{equation*}

\noindent where $\tau_j$ is the eigenvalue associated with  $e_j$.

\noindent Finally, the projections of the feedback and disturbance components are given by

\begin{equation*}
\int_{\Omega} \chi_\omega 
\Bigg( \sum_{i=1}^{N} e_i\, \tilde{u}_i(t) \Bigg) e_j~ dx 
= \sum_{i=1}^{N} \tilde{u}_i(t)\, (e_i,\, e_j)_{L^2(\omega)},    
\end{equation*}

\begin{equation*}
\int_{\Omega} \chi_\omega 
\Bigg( \sum_{i=1}^{N} e_i\, [\,\hat{u}_i(t) + d_i(t)] \Bigg) e_j~ dx 
= \sum_{i=1}^{N} [\,\hat{u}_i(t) + d_i(t)]\, (e_i,\, e_j)_{L^2(\omega)}.    
\end{equation*}

\noindent Then, we deduce

\begin{equation}
\begin{cases}\displaystyle 
y_1^{\prime}(t)  =  -\tau_1 y_1(t)+\sum_{i=1}^{N} \tilde{u}_{i}(t)\left(e_{i}, e_{1}\right)_{L^{2}(\omega)}+\sum_{i=1}^{N} (\hat{u}_{i}(t)+d_i(t)) \left(e_{i}, e_{1}\right)_{L^{2}(\omega)} \\
\quad\vdots \qquad\qquad\vdots\hfill\vdots\qquad\qquad\hfill\qquad\vdots \hfill\\\displaystyle 
y_N^{\prime}(t) = -\tau_N y_N(t)+\sum_{i=1}^{N} \tilde{u}_{i}(t)\left(e_{i}, e_{N}\right)_{L^{2}(\omega)}+\sum_{i=1}^{N} (\hat{u}_{i}(t)+d_i(t)) \left(e_{i}, e_{N}\right)_{L^{2}(\omega)} .
\end{cases}
\end{equation}

\noindent Thus, with the aid of the
matrices

\begin{equation*}
X_N(t):=\begin{pmatrix}
y_1(t) \\
y_2(t) \\
\vdots \\
y_N(t)
\end{pmatrix}, \quad U_N(t):=\begin{pmatrix}
\tilde{u}_1(t) \\
\tilde{u}_2(t) \\
\vdots \\
\tilde{u}_N(t)
\end{pmatrix}, \quad A_N:=\begin{pmatrix}
-\tau_1 & & & \\
& -\tau_2 & & \\
& & \ddots & \\
& & & -\tau_N
\end{pmatrix}
\end{equation*}
\begin{equation*}
   F_N :=\begin{pmatrix}
\hat{u}_1(t)+d_1(t) \\
\hat{u}_2(t)+d_2(t) \\
\vdots \\
\hat{u}_N(t)+d_N(t)
\end{pmatrix},
\end{equation*} and using the definition of $J_N$, we  can construct a finite system 
 
 \begin{equation*}
 \dot{X}_N(t)=A_N X_N(t)+J_N U_N(t)+  J_N  F_N . 
 \end{equation*}


\subsection{Design of \texorpdfstring{$\tilde{u}$}{u}}\label{subsec2-1}

\noindent \noindent In this section, we present the construction of the feedback control law and introduce a Lyapunov function designed for stability analysis. For a given parameter $\lambda$ (and consequently, a fixed $N$), we select constants $\gamma_\lambda, \mu_\lambda > 0$, which will be specified later. Following the approach suggested in \cite{X2024}, we define the feedback control law as:  

\begin{equation*}
U_N(y(t)) := -\gamma_\lambda X_N(t).
\end{equation*}  

\noindent To analyze the stability of the closed-loop system, we introduce the following Lyapunov function, referred to as the Frequency Lyapunov function  

\begin{equation*}
V(y) := \mu_\lambda \|X_N\|^2 + \|P_N^\perp y\|^2_{L^2(\Omega)}, \quad \forall y \in L^2(\Omega).
\end{equation*}  

\noindent Here, $\|X_N\|^2$ denotes the Euclidean norm $X_N^T X_N = \sum_{i=1}^N y_i^2$, which is equivalent to $\|P_N y\|_{L^2(\Omega)}^2$.

\noindent For any initial state $y_0 \in H_0^1(\Omega)$, the time derivative of $V(y(t))$ is computed as follows 

\begin{equation*}
\begin{aligned}  
\frac{d}{dt} V(y(t)) &= \frac{d}{dt} \left( \mu_\lambda \|X_N\|^2 \right) + \frac{d}{dt} \left( P_N^\perp y, P_N^\perp y \right)_{L^2(\Omega)} \\[2ex]  
&= \mu_\lambda \frac{d}{dt} \sum_{i=1}^N y_i^2 + \frac{d}{dt} \left( \sum_{i=N+1}^\infty y_i e_i, \sum_{i=N+1}^\infty y_i e_i \right)_{L^2(\Omega)} \\[2ex]  
&= \mu_\lambda \frac{d}{dt} \|X_N\|^2 + 2 \left\langle P_N^\perp y, \frac{d}{dt} y \right\rangle_{H_0^1(\Omega), H^{-1}(\Omega)}.  
\end{aligned}
\end{equation*}  

\noindent Expanding the derivative of $\|X_N\|^2$ further, we obtain

\begin{equation*}
\begin{aligned}  
\mu_\lambda \frac{d}{dt} \|X_N\|^2 &= \mu_\lambda \left( \dot{X}_N^\top X_N + X_N^\top \dot{X}_N \right) \\[2ex] 
&=\mu_{N}\left[\left(A_{N} X_{N}-\gamma_{\lambda} J_{N} X_{N}+J_{N}F_{N}\right)^{\top} X_{N}+X_{N}^{\top}\left(A_{N} X_{N}-\gamma_{\lambda} J_{N} X_{N}+J_{N}F_{N}\right)\right]\\[2ex]
&= \mu_\lambda \left[ X_N^\top \left( 2A_N - 2\gamma_\lambda J_N \right) X_N + 2F_N^\top J_N X_N \right] \\[2ex]  
&= 2\mu_\lambda X_N^\top A_N X_N - 2\mu_\lambda \gamma_\lambda X_N^\top J_N X_N + 2\mu_\lambda F_N^\top J_N X_N.  
\end{aligned}
\end{equation*}

\noindent By using \eqref{spect_ineq}, we have

\begin{equation*}
    \begin{aligned}
       \mu_{\lambda} \frac{d}{d t}\left\|X_{N}\right\|_{2}^{2} &\leq -2 \mu_{\lambda}\gamma_{\lambda} X_{N}^{\top}  J_{N} X_{N}+2\mu_{\lambda}F_{N}^{\top}  J_{N} X_{N} \\[2ex] 
& \leq - 2\mu_\lambda \gamma_\lambda C_\lambda\left\|X_N\right\|^2+2 \mu_{\lambda} \sum_{j=1}^{N}\sum_{i=1}^{N}\left(\hat{u}_i(t)+d_i(t)\right)\left(e_{i}, e_{j}\right)_{L^{2}(\omega)}  y_{j}(t)\\[2ex] 
&=- 2\mu_\lambda \gamma_\lambda C_\lambda\left\|X_N\right\|^2+2\mu_\lambda\left((\hat{u}+d),\chi_\omega P_N y\right)_{L^2(\Omega)} .
    \end{aligned}
\end{equation*}

\noindent On the other hand, by using the Cauchy–Schwarz inequality, weak spectral inequality \eqref{decomp_espec}, and Cauchy-$\varepsilon$ inequality with $\varepsilon = \frac{\gamma_\lambda}{\lambda} > 0$, it follows

\begin{equation*}
    \begin{aligned}
& 2\left\langle P_N^\perp y, \frac{d}{d t} y\right\rangle_{H_0^1(\Omega) \times H^{-1}(\Omega)}\\[2ex]
&=2\left\langle P_N^\perp y, \Delta y-\gamma_\lambda \chi_\omega\left(\sum_{i=1}^{N} e_i y_i(t)\right)+\chi_\omega(\hat{u}+d)\right\rangle_{H_0^1(\Omega) \times H^{-1}(\Omega)} \\[2ex]
& =2\left( P_N^\perp y, \Delta y\right)_{L^2(\Omega)}-2 \gamma_\lambda\left( P_N^\perp y, \chi_\omega\left(\sum_{i=1}^N e_i y_i\right)\right)_{L^2(\Omega)}+2\left(P_N^\perp y,\chi_\omega(\hat{u}+d)\right)_{L^2(\Omega)} \\
& =2 \sum_{j=N+1}^{\infty} y_j^2\left(e_j, \Delta e_i\right)_{L^2(\Omega)}-2 \gamma_\lambda\left(P_N^\perp y, \chi_\omega P_N y\right)_{L^2(\Omega)}+2\left(P_N^\perp y,\chi_\omega(\hat{u}+d)\right)_{L^2(\Omega)}  \\[2ex]
& \leq -2 \sum_{j=N+1}^{\infty} y_j^2 \tau_j+2 \gamma_\lambda\left(P_N^\perp y, P_N y\right)_{L^2(\omega)} +2\left(P_N^\perp y,\chi_\omega(\hat{u}+d)\right)_{L^2(\Omega)} \\[2ex]
&\leq-2 \sum_{j=N+1}^{\infty} \lambda y_j^2+2 \gamma_\lambda\left\|P_N^{\perp}y\right\|_{L^2(\omega)}\left\| P_N y\right\|_{L^2(\omega)}+2\left(P_N^\perp y,\chi_\omega(\hat{u}+d)\right)_{L^2(\Omega)} \\[2ex]
& \leq-2 \lambda\left\|P_{N}^{\perp} y\right\|_{L^{2}(\Omega)}^{2}+\lambda\left\|P_{N}^{\perp} y\right\|_{L^{2}(\Omega)}^{2}+\frac{\gamma_{\lambda}^{2}}{\lambda}\left\|P_{N} y\right\|_{L^{2}(\Omega)}^{2}+2\left(\chi_\omega P_N^\perp y,(\hat{u}+d)\right)_{L^2(\Omega)}  \\[2ex]
& =-\lambda\left\|P_{N}^\perp y\right\|_{L^2(\Omega)}^{2}+\frac{\gamma_{\lambda}^{2}}{\lambda}\left\|X_{N} \right\|^{2}+2\left((\hat{u}+d),\chi_\omega P_N^\perp y\right)_{L^2(\Omega)}.
\end{aligned}
\end{equation*}
\noindent Therefore,

\begin{equation*}
\begin{aligned}
    \frac{d}{dt} V(y(t))\leq& \left(- 2\mu_\lambda \gamma_\lambda C_\lambda+\frac{\gamma_{\lambda}^{2}}{\lambda} \right)\left\|X_{N} \right\|^{2}-\lambda\left\|P_{N}^\perp y\right\|_{L^{2}(\Omega)}^2+2 \left((\hat{u}+d),\mu_{\lambda}\chi_\omega P_N y+\chi_\omega P_N^\perp y\right)_{L^2(\Omega)} .
    \end{aligned}
\end{equation*}


\subsection{Design of \texorpdfstring{$\hat{u}$}{u}}\label{subsec2-2}

To attenuate the effect of the unknown disturbance $ d(t,x) $, we now construct the control term $ \hat{u} $. Let $ {H} = L^2(\Omega) $, and recall the key property of the multivalued sign operator in $ L^2(\Omega) $

\begin{equation*}
\int_{\Omega} \theta f ~d x=\|f\|_{L^2(\Omega)}, \quad\forall f \in L^2(\Omega),  \quad\forall \theta \in \operatorname{sign}_{L^2(\Omega)}(f) .    
\end{equation*}

\noindent Since the disturbance is supposed to be unknown, we can not choose $\hat{u}(t,x)=-d(t,x)$.  Therefore, thanks to the Cauchy-Schwarz inequality, we have 

\begin{equation*}
\begin{aligned}
&2\left((\hat{u}+d),\mu_{\lambda}\chi_\omega P_N y+\chi_\omega P_N^\perp y\right)_{L^2(\Omega)}  \\[2ex]
&= 2\left(\hat{u},\chi_\omega (\mu_{\lambda}P_N y+\ P_N^\perp y)\right)_{L^2(\Omega)}+2\left(d,\chi_\omega (\mu_{\lambda}P_N y+\ P_N^\perp y)\right)_{L^2(\Omega)}\\[2ex]
&\leq \left(\hat{u},\chi_\omega (\mu_{\lambda}P_N y+\ P_N^\perp y)\right)_{L^2(\Omega)}+2\left\|d(t,\cdot)\right\|_{L^2(\Omega)}\left\|\chi_\omega (\mu_{\lambda}P_N y+\ P_N^\perp y)\right\|_{L^2(\Omega)}
    \end{aligned}
\end{equation*}

\noindent Under assumption \ref{A_2}, the disturbance satisfies $ \|d(t, \cdot)\|_{L^2(\Omega)} \leq D $. Then, it suffices to consider the feedback law

\begin{equation*}
    \hat{u}(t,x)=-D \operatorname{sign}_{L^2(\Omega)}\left[\chi_\omega (\mu_{\lambda}P_N y+\ P_N^\perp y)\right],
\end{equation*}

\noindent thereby guaranteeing that

\begin{equation*}
\begin{aligned}
&2\left((\hat{u}+d),\mu_{\lambda}\chi_\omega P_N y+\chi_\omega P_N^\perp y\right)_{L^2(\Omega)}  \\[2ex]
&\leq -2D\left(\operatorname{sign}_{L^2(\Omega)}\left[\chi_\omega (\mu_{\lambda}P_N y+\ P_N^\perp y)\right],\chi_\omega (\mu_{\lambda}P_N y+\ P_N^\perp y)\right)_{L^2(\Omega)}+2D\left\|\chi_\omega (\mu_{\lambda}P_N y+\ P_N^\perp y)\right\|_{L^2(\Omega)}\\[2ex]
&\leq -2D\left\|\chi_\omega (\mu_{\lambda}P_N y+\ P_N^\perp y)\right\|_{L^2(\Omega)}+2D\left\|\chi_\omega (\mu_{\lambda}P_N y+\ P_N^\perp y)\right\|_{L^2(\Omega)}=0.
    \end{aligned}
\end{equation*}

\noindent From the previous derivations, we obtain

\begin{equation}\label{eq:derivative_initial}
    \frac{d}{dt} V(y(t))\leq  \left(- 2\mu_\lambda \gamma_\lambda C_\lambda+\frac{\gamma_{\lambda}^{2}}{\lambda} \right)\left\|X_{N} \right\|^{2}-\lambda\left\|P_{N}^\perp y\right\|_{L^{2}(\Omega)}^2.
\end{equation}

\noindent We now fix the parameters

\begin{equation*}
\gamma_\lambda := \frac{\lambda}{C_\lambda}, \qquad \mu_\lambda := \frac{\gamma_\lambda^2}{\lambda^2} = C_\lambda^{-2},   
\end{equation*}

\noindent which, when substituted into \eqref{eq:derivative_initial}, yield

\begin{equation*}
\begin{aligned}
\frac{d}{dt} V(y(t)) 
&\leq \left( - 2\mu_\lambda \lambda + \mu_\lambda \lambda \right) \|X_N\|^2 - \lambda \|P_N^\perp y\|_{L^2(\Omega)}^2 \\[2ex]
&= -\mu_\lambda \lambda \|X_N\|^2 - \lambda \|P_N^\perp y\|_{L^2(\Omega)}^2 \\[2ex]
&= -\lambda \left( \mu_\lambda \|X_N\|^2 + \|P_N^\perp y\|_{L^2(\Omega)}^2 \right) = -\lambda V(y(t)).
\end{aligned}  
\end{equation*}

\noindent This differential inequality implies the exponential decay of $V(y(t))$. Indeed, we observe that

\begin{equation*}
\frac{d}{dt} \left( e^{\lambda t} V(y(t)) \right) = e^{\lambda t} \left( \frac{d}{dt} V(y(t)) + \lambda V(y(t)) \right) \leq 0.   
\end{equation*}

\noindent Upon integration, this becomes

\begin{equation*}
 V(y(t)) \leq e^{-\lambda t} V(y(0)).   
\end{equation*}

\noindent Let us introduce the constants

\begin{equation*}
 \alpha := \min\left\{1, \sqrt{\mu_\lambda}\right\} = \min\left\{1, \frac{1}{C_\lambda} \right\}, \qquad
\beta := \max\left\{1, \sqrt{\mu_\lambda}\right\} = \max\left\{1, \frac{1 }{C_\lambda} \right\}.   
\end{equation*}

\noindent Using these constants, we can bound the Lyapunov functional as follows

\begin{equation*}
\begin{aligned}
V(y(t)) &= \mu_\lambda \|P_N y(t)\|_{L^2(\Omega)}^2 + \|P_N^\perp y(t)\|_{L^2(\Omega)}^2 \\[2ex]
&\geq \alpha^2 \left( \|P_N y(t)\|_{L^2(\Omega)}^2 + \|P_N^\perp y(t)\|_{L^2(\Omega)}^2 \right) = \alpha^2 \|y(t)\|_{L^2(\Omega)}^2,
\end{aligned}
\end{equation*}

\noindent and similarly,

\begin{equation*}
\begin{aligned}
V(y(0)) & \leq\left(\beta^2\left\|P_N y(0)\right\|_{L^2(\Omega)}^2+\beta^2\left\|P_N^\perp y(0)\right\|_{L^2(\Omega)}^2\right) =\beta^2\|y(0)\|_{L^2(\Omega)}^2.
\end{aligned}  
\end{equation*}

\noindent Combining these bounds with the exponential decay of $V(y(t))$, we conclude that

\begin{equation*}
\alpha^2 \|y(t)\|_{L^2(\Omega)}^2 \leq V(y(t)) \leq e^{-\lambda t} V(y(0)) \leq \beta^2 e^{-\lambda t} \|y(0)\|_{L^2(\Omega)}^2.    
\end{equation*}

\noindent Taking square roots, we obtain the exponential stability estimate

\begin{equation*}
\|y(t)\|_{L^2(\Omega)} \leq \frac{\beta}{\alpha} e^{-\frac{\lambda}{2} t} \|y(0)\|_{L^2(\Omega)}.    
\end{equation*}


\subsection{Feedback law}\label{feedback_design}

\noindent The goal of this section is to derive the feedback law. We begin by constructing the associated feedback operator. The feedback control system is defined through the following operators:
\begin{enumerate}
\item The \emph{linear feedback operator} $\mathscr{F}: L^2(\Omega) \to L^2(\Omega)$:

\begin{equation*}\label{eq:F_operator}
\mathscr{F}\phi=-\gamma_\lambda\left(\sum_{m=1}^{N}\left(\phi,e_i\right)_{L^2(\Omega)} e_i\right)=-\gamma_\lambda P_{N} \phi, ~\text{with } \gamma_\lambda=C_\lambda^{-1}\lambda.
\end{equation*}

\item The \emph{nonlinear set-valued operator} $\mathscr{B}: L^2(\Omega) \to 2^{L^2(\Omega)}$:

\begin{equation*}\label{eq:B_operator}
\mathscr{B} \phi := -D  \operatorname{sign}_{L^2(\Omega)}(\chi_\omega \mathscr{C} \phi), \quad \text{ where } \mathscr{C} y := \mu_\lambda P_N y + P_N^\perp y.
\end{equation*}

\end{enumerate}

\noindent Finally, recalling that the total control is given by the decomposition $u=\tilde{u}+\hat{u}$, given by

\begin{equation*}
\tilde{u}(t, x)=\mathscr{F}(y(t, x)) ,\quad \hat{u}(t,x)= \mathscr{B}(y(t,x))  ,
\end{equation*}
\noindent we obtain the explicit form of the feedback law

\begin{equation}\label{feedback}
\begin{aligned}
u(t, x) & = -\gamma_\lambda  P_{N} y -D \operatorname{sign}_{L^2(\Omega)}(\chi_\omega \mathscr{C} y).
\end{aligned}
\end{equation}

\noindent Then, the closed-loop System is given by  

\begin{equation}\label{eq:closed_loop_system}
\begin{cases}
\partial_t y + \mathscr{A}y - \chi_\omega \mathscr{F} y - \chi_\omega \mathscr{B} y\ni \chi_\omega d(t) & \text{in } (0,\infty) \times \Omega \\
y = 0 & \text{on } (0,\infty) \times \partial\Omega \\
y(0,x) = y_0(x) & \text{in } \Omega.
\end{cases}
\end{equation}



\section{Well-Posedness}\label{Sec_well_p}

In this section, we present the well-posedness of the closed-loop system \eqref{eq:closed_loop_system}, which is a differential inclusion, through maximal monotone operator theory. To this end, we  reformulate the system in an abstract setting by introducing the multivalued operator

\begin{equation}\label{operator}
\begin{cases}
A: D({A}) \subset L^2(\Omega) \longrightarrow 2^{L^2(\Omega)} \\
D(A) = D(\mathscr{A})\\
A(y)= -\Delta y + \gamma_\lambda\chi_\omega P_N y  + \chi_\omega D \operatorname{sign}_{L^2(\Omega)}(\chi_\omega \mathscr{C}y).
\end{cases}
\end{equation}

\noindent It follows that the resulting closed-loop system is the differential inclusion

\begin{equation}\label{system} 
 \begin{cases}
y^{\prime}(t)+A y(t) \ni\chi_\omega d(t), \quad t>0 \\
y(0)=y_0.
\end{cases}   
\end{equation}

\noindent We define the inner product $ \langle\cdot, \cdot\rangle_{\mu} $ on $ L^2(\Omega) $ by

\begin{equation*}
(u, v)_{\mu} = \mu (P_N u, P_N v) + (P_N^\perp u, P_N^\perp v).
\end{equation*}

\noindent This bilinear form induces a norm $ \|u\|^2_\mu = (u, u)_\mu =  \mu \|P_N u\|^2 + \|P_N^\perp u\|^2  $, which is equivalent to the original norm on $ L^2(\Omega) $. Since $\mu>1$, we have $\|u\|_{L^2(\Omega)}^2 \leq(u, u)_\mu \leq \mu\|u\|_{L^2(\Omega)}^2$, it follows that $(L^2(\Omega), \langle\cdot,\cdot\rangle_{L^2(\Omega)})$ and $(L^2(\Omega),\langle\cdot,\cdot\rangle_\mu)$ define the same topology.  Moreover, notice that $ V(y(t))=\|u\|^2_\mu = \mu ( u, P_N v) + ( u, P_N^\perp v)=(u,\mathscr{C} v) $, which makes natural the choice of the inner product and will play a central role in this section.

\noindent To establish the well-posedness of the system, we will employ the maximal monotone operator theory. In this framework, we present two key results: the first one (Proposition \ref{wp1}) states that the operator $A$ is monotone, while the second one (Proposition \ref{wp2}) states that the operator $I+A$ is surjective.


\begin{prop}
\label{wp1}
\noindent The operator $A$ defined by \eqref{operator} is monotone.
\end{prop}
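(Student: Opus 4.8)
The plan is to prove monotonicity of $A$ by decomposing it into three pieces and verifying that each contributes nonnegatively to the quadratic form $(A y_1 - A y_2,\, y_1 - y_2)$, working in the weighted inner product $\langle\cdot,\cdot\rangle_\mu$ that the excerpt has already introduced. Concretely, for $y_1, y_2 \in D(A)$ and arbitrary selections $\xi_k \in A y_k$, I would write $w = y_1 - y_2$ and split
\begin{equation*}
(\xi_1 - \xi_2,\, w)_\mu = \underbrace{(-\Delta y_1 + \Delta y_2,\, w)_\mu}_{\text{(I)}} + \underbrace{\gamma_\lambda\bigl(\chi_\omega P_N w,\, w\bigr)_\mu}_{\text{(II)}} + \underbrace{D\bigl(\chi_\omega(\eta_1 - \eta_2),\, w\bigr)_\mu}_{\text{(III)}},
\end{equation*}
where $\eta_k \in \operatorname{sign}_{L^2(\Omega)}(\chi_\omega \mathscr{C} y_k)$. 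The key observation, already flagged in the excerpt, is that $(u, \mathscr{C} v)_{L^2(\Omega)} = (u,v)_\mu$, which lets me convert the weighted inner products into ordinary $L^2$ pairings against $\mathscr{C} w$ and thereby align each term with the structure of the operators.

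For term (III), the sign operator is the classical monotone multivalued map, so $(\chi_\omega(\eta_1 - \eta_2),\, \mathscr{C} w)_{L^2} = (\eta_1 - \eta_2,\, \chi_\omega \mathscr{C} w)_{L^2} = (\eta_1 - \eta_2,\, \chi_\omega \mathscr{C} y_1 - \chi_\omega \mathscr{C} y_2)_{L^2} \geq 0$ by monotonicity of $\operatorname{sign}_{L^2(\Omega)}$ (using $\chi_\omega^2 = \chi_\omega$ to match the argument of the sign); this is the step where the set-valued structure enters and the coefficient $D > 0$ keeps it nonnegative. For term (I), I expect to reproduce exactly the Lyapunov-derivative computation of Section \ref{Sec_Feedback}: splitting $\mathscr{C} w = \mu_\lambda P_N w + P_N^\perp w$ and integrating by parts, the diffusion term yields $\mu_\lambda \sum_{i\le N}\tau_i w_i^2 + \sum_{i > N}\tau_i w_i^2 \geq 0$, which is manifestly nonnegative since all $\tau_i > 0$. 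The cross contribution and the feedback term (II) are precisely the ones that the weak spectral inequality \eqref{spect_ineq} and the Cauchy-$\varepsilon$ estimate were tuned to dominate, so I would reuse that bookkeeping to absorb any indefinite cross-terms into the dissipative quadratic form.

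The main obstacle I anticipate is handling term (II), the localized feedback $\gamma_\lambda \chi_\omega P_N w$, which is not sign-definite on its own: the pairing $\gamma_\lambda(\chi_\omega P_N w,\, \mathscr{C} w)_{L^2}$ produces both a good term $\gamma_\lambda \mu_\lambda \|P_N w\|_{L^2(\omega)}^2$ and a potentially bad cross-term $\gamma_\lambda (P_N^\perp w,\, \chi_\omega P_N w)_{L^2(\omega)}$ of indefinite sign. Controlling this is exactly the role of the spectral inequality together with the specific choice $\gamma_\lambda = \lambda/C_\lambda$, $\mu_\lambda = C_\lambda^{-2}$: the argument in \eqref{eq:derivative_initial} shows that after applying \eqref{spect_ineq} to the $\|P_N w\|_{L^2(\omega)}^2$ term and Cauchy-$\varepsilon$ with $\varepsilon = \gamma_\lambda/\lambda$ to the cross-term, the combined contribution of (I) and (II) is bounded below by $\lambda\bigl(\mu_\lambda \|P_N w\|^2 + \|P_N^\perp w\|^2\bigr) = \lambda \|w\|_\mu^2 \geq 0$.

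Thus the cleanest route is to recognize that $(\xi_1 - \xi_2,\, w)_\mu \geq \lambda \|w\|_\mu^2 \geq 0$ is nothing but the differential-inequality computation of Section \ref{Sec_Feedback} read statically: the same algebra that produced $\frac{d}{dt}V(y) \leq -\lambda V(y)$ shows that $A$ is in fact strongly monotone (not merely monotone) with constant $\lambda$ in the $\|\cdot\|_\mu$ norm. I would therefore carry out the proof by transcribing that computation with $y(t)$ replaced by the difference $w = y_1 - y_2$ and the disturbance term dropped, verify the three terms sign-by-sign as above, and conclude monotonicity (with the stronger strong-monotonicity as a byproduct that will also streamline the surjectivity argument in Proposition \ref{wp2}).
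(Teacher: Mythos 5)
Your proposal is correct and follows essentially the same route as the paper: split off the set-valued part and reduce it to monotonicity of $\operatorname{sign}_{L^2(\Omega)}$ via the identity $(u,v)_\mu=(u,\mathscr{C}v)_{L^2(\Omega)}$, then control the Laplacian-plus-feedback part with the spectral inequality \eqref{spect_ineq} and a Cauchy-$\varepsilon$ absorption of the cross term, exactly the static reading of the Section \ref{Sec_Feedback} computation. The only (immaterial) differences are that your strong-monotonicity constant should be $\lambda/2$ rather than $\lambda$ once the factor of $2$ from $\frac{d}{dt}\|\cdot\|^2$ is removed, and that the paper keeps the $\|\nabla z\|^2$ terms to record $H_0^1$-coercivity of the linear part, which it reuses in the Lax--Milgram step of Proposition \ref{wp2}.
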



\begin{proof}

\noindent Let $y_1,y_2\in  D({A})$,  there exists $\eta_i\in \operatorname{sign}_{L^{2}(\Omega)}(\chi_\omega \mathscr{C} y_i) $, $i=1,2,$. Note that we can decompose $A$ into linear and nonlinear parts. Its linear part is given by 

\begin{equation*}
    A_1y:=-\Delta y + \gamma_\lambda\chi_\omega P_N y,
\end{equation*}
\noindent and the nonlinear part is

\begin{equation}
\label{eq: B nonlinear}
    B y=\chi_\omega D \operatorname{sign}_{L^2(\Omega)}(\chi_\omega \mathscr{C}y).
\end{equation}
\noindent We want to prove 

\begin{equation*}
  \left\langle Ay_1-Ay_2,y_1-y_2 \right\rangle_\mu=\langle A_1 y_1 -A_1 y_2 ,y_1-y_2\rangle_\mu+ \langle B y_1 +By_2, y_1-y_2\rangle_\mu\geq 0.  
\end{equation*}

\noindent To analyze the linear part, we consider $z=y_1-y_2$. This give us

\begin{equation}\label{eq:monotonicity_A1}
\begin{aligned} 
\left\langle A_1 z,z\right\rangle_{\mu} =& -\mu_\lambda\left\langle \Delta z, P_N z \right\rangle_{L^{2}(\Omega)} -\left\langle\Delta z , P_N^\perp z \right\rangle_{L^{2}(\Omega)} + \mu_\lambda\gamma_{\lambda} \left\langle \chi_{\omega} P_N z,P_N z \right\rangle_{L^{2}(\Omega)}\\
&+  \gamma_{\lambda} \left\langle \chi_{\omega} P_N z,P_N^\perp z \right\rangle_{L^{2}(\Omega)}.
\end{aligned}    
\end{equation}

\noindent Notice that, by using integration by parts in the first and second term of \eqref{eq:monotonicity_A1}, we get

\begin{align} 
&-\mu_\lambda\left\langle \Delta z, P_N z \right\rangle_{L^2 (\Omega) } -\left\langle\Delta z , P_N^\perp z \right\rangle_{L^2(\Omega)}\notag \\[2ex]
&= \mu_\lambda\sum_{i=1}^N z_i^2(\nabla e_i, \nabla e_i)_{L^2(\Omega)^n} -\frac{1}{2}\sum^\infty_{i=N+1}y_i^2(\Delta e_i, e_i)_{L^2(\Omega)} +\frac{1}{2}\sum^\infty_{i=N+1}z_i^2(\nabla e_i, \nabla e_i)_{L^2(\Omega)^n} \notag  \\[2ex]
&= \mu_\lambda\|\nabla (P_N z)\|^2_{L^2(\Omega)}+\frac{1}{2} \sum_{i=N+1}^{\infty}{\tau}_i z_i^2+\|\nabla(P_N^\perp z)\|^2_{L^2(\Omega)}\notag\\[2ex]
&\geq \mu_\lambda\|\nabla (P_N z)\|^2_{L^2(\Omega)}+\frac{1}{2}\|\nabla(P_N^\perp z)\|^2_{L^2(\Omega)}+\frac{\lambda}{2}\|P_N^\perp z\|^2_{L^2(\Omega)}\notag\\[2ex]
&\geq\frac{1}{2}\|\nabla z\|^2_{L^2(\Omega)}+\frac{\lambda}{2}\|P_N^\perp z\|^2_{L^2(\Omega)}
\label{eq:monotonicity_A1_P1}.
\end{align}

\noindent Furthermore, using \eqref{spect_ineq} and the Cauchy-$\varepsilon$ inequality in the third and fourth term of \eqref{eq:monotonicity_A1}, taking $\varepsilon = \frac{\sqrt{2}}{ C_\lambda}$,  it yields

\begin{align}
&\mu_\lambda\gamma_{\lambda} \left\langle \chi_{\omega} P_N z,P_N z \right\rangle_{L^{2}(\Omega)}\notag\\[2ex]
&\geq \mu_\lambda\gamma_{\lambda}\|P_N y\|^2_{L^2(\omega)}-\gamma_\lambda\|P_N y\|_{L^2(\Omega)}\|P_N^\perp y\|_{L^2(\omega)}\notag\\[2ex]
&\geq \mu_\lambda\gamma_{\lambda} C_\lambda\|P_N y\|^2_{L^2(\Omega)}-\frac{\gamma_\lambda\varepsilon}{2}\|P_N y\|^2_{L^2(\Omega)}-\frac{\gamma_\lambda}{2\varepsilon}\|P_N^\perp y\|^2_{L^2(\Omega)}\notag\\[2ex]
&=\left(\mu_\lambda\gamma_{\lambda} C_\lambda-\frac{\gamma_\lambda\sqrt{2}}{2 C_\lambda}\right)\|P_N y\|^2_{L^2(\Omega)}-\frac{\gamma_\lambda C_\lambda}{2\sqrt{2}}\|P_N^\perp y\|^2_{L^2(\Omega)}\label{eq:monotonicity_A1_P2}.
\end{align}

\noindent Now, using \eqref{eq:monotonicity_A1_P1} and \eqref{eq:monotonicity_A1_P2} in \eqref{eq:monotonicity_A1}, gathering terms and recalling that $\gamma_\lambda = \frac{\lambda}{C_\lambda}$ and $\mu_\lambda=\frac{1}{C_\lambda^2}$. It follows

\begin{equation}
\begin{aligned}\label{eq:A_1-dissip}
     \left\langle A_1 z,z\right\rangle_{\mu}  &\geq \frac{1}{2}\|\nabla z\|^2_{L^2(\Omega)}+\left(  \frac{\lambda}{C_\lambda^2} - \frac{\lambda\sqrt{2}}{2C^2_\lambda}\right) \| P_N z \|^2_{L^{2}(\Omega)} + \left( \frac{\lambda}{2}- \frac{{\lambda\sqrt{2}}}{4}  \right)\| P_N^\perp z\|^2_{L^{2}(\Omega)}\\[2ex]
     &\geq \frac{1}{2}\|\nabla z\|^2_{L^2(\Omega)}+\frac{\lambda}{C_\lambda^2}\left(  1 - \frac{\sqrt{2}}{2}\right) \| P_N z \|^2_{L^{2}(\Omega)} + \frac{\lambda}{2}\left( 1- \frac{\sqrt{2}}{2  }  \right)\| P_N^\perp z \|^2_{L^{2}(\Omega)} \\[2ex]
     &= C\left(\|\nabla z\|^2_{L^2(\Omega)}+ \|  z \|^2_{L^{2}(\Omega)}\right)  =C\|z\|^2_{H_0^1(\Omega)}\geq 0,
\end{aligned}    
\end{equation}
where $C=\min\left\{\frac{1}{2},\frac{\lambda}{C_\lambda^2}\left(  1 - \frac{\sqrt{2}}{2}\right),\frac{\lambda}{2}\left( 1- \frac{\sqrt{2}}{2  }  \right)\right\}$.

\noindent For the nonlinear part, we have

\begin{align*}
    \langle B y_1- By_2, y_1-y_2\rangle_\mu& = \mu_\lambda \langle\chi_\omega\eta_1-\chi_\omega\eta_2,P_N (y_1-y_2)\rangle_{L^{2}(\Omega)}+ \langle\chi_\omega\eta_1-\chi_\omega\eta_2, P_N^\perp (y_1-y_2)\rangle_{L^{2}(\Omega)} \\[2ex]
    & =  \langle\eta_1-\eta_2, \chi_\omega\mu P_N (y_1-y_2)\rangle_{L^{2}(\Omega)}+ \langle\eta_1-\eta_2, \chi_\omega P_N^\perp (y_1-y_2)\rangle_{L^{2}(\Omega)} \\[2ex]
    &= \langle\eta_1-\eta_2,\chi_\omega \mu_\lambda P_N (y_1-y_2)+\chi_\omega (P_N^\perp (y_1-y_2))\rangle_{L^{2}(\Omega)} \\[2ex]
    &=\langle\eta_1-\eta_2,\chi_\omega\mathscr{C}y_1 -\chi_\omega \mathscr{C} y_2  \rangle_{L^{2}(\Omega)}\geq 0.
\end{align*}

\noindent The last inequality is a consequence of the multivalued sign operator. 

\noindent Therefore, we get

\begin{equation}\label{eq:A-dissip}
    \left\langle Ay_1 -A y_2,y_1-y_2\right\rangle_{\mu}  \geq \langle A_1 y_1 -A_1 y_2 ,y_1-y_2\rangle_\mu+ \langle B y_1 +By_2, y_1-y_2\rangle_\mu\geq 0.  
\end{equation}
\end{proof}

\begin{rem}
    The choice of the $\mu$-inner product is particularly useful in the nonlinear part of the proof. This definition yields a natural monotonicity property, which is key to handling the sign operator.
\end{rem}


\begin{prop}\label{wp2}
The operator $A$ defined by \ref{operator} satisfies $ R(I+A) = L^2(\Omega) $.
\end{prop}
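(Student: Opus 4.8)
The plan is to prove surjectivity directly, without first invoking maximal monotonicity (the latter will instead follow \emph{a posteriori} from Proposition~\ref{wp1} together with this proposition, via Minty's theorem). Fix $f \in L^2(\Omega)$; I must produce $y \in D(\mathscr{A})$ and $\eta \in \operatorname{sign}_{L^2(\Omega)}(\chi_\omega \mathscr{C}y)$ with $y - \Delta y + \gamma_\lambda \chi_\omega P_N y + D\,\chi_\omega \eta = f$. The natural route is a regularization--compactness argument: I would replace the multivalued operator $\operatorname{sign}_{L^2(\Omega)}$ by the single-valued, monotone, globally Lipschitz approximation $s_\varepsilon(g) := g\big/\sqrt{\|g\|_{L^2(\Omega)}^2 + \varepsilon^2}$, which is the gradient of the smooth convex function $g \mapsto \sqrt{\|g\|_{L^2(\Omega)}^2+\varepsilon^2}$, satisfies $\|s_\varepsilon(g)\|_{L^2(\Omega)} \le 1$, and obeys $s_\varepsilon(g_\varepsilon)\to g/\|g\|_{L^2(\Omega)}$ whenever $g_\varepsilon \to g \neq 0$. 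Throughout I would work in the equivalent inner product $(\cdot,\cdot)_\mu$, since it is exactly in this geometry that the operator is monotone and coercive.

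The first step is to solve the regularized equation. Set $\mathcal{T}_\varepsilon := I + A_1 + B_\varepsilon$, where $A_1 y = -\Delta y + \gamma_\lambda \chi_\omega P_N y$ and $B_\varepsilon y = D\,\chi_\omega s_\varepsilon(\chi_\omega \mathscr{C} y)$, regarded as a map from $V := H^1_0(\Omega)$ into $V^* := H^{-1}(\Omega)$ with $L^2(\Omega)$, endowed with $(\cdot,\cdot)_\mu$, as pivot space. This operator is bounded and hemicontinuous (indeed $A_1$ is linear continuous and $s_\varepsilon$ is Lipschitz), monotone (the computation of Proposition~\ref{wp1} applies verbatim, with $B_\varepsilon$ monotone because $s_\varepsilon$ is), and coercive: pairing with $y$ and using \eqref{eq:A_1-dissip} together with $(B_\varepsilon y, y)_\mu \ge 0$ gives $(\mathcal{T}_\varepsilon y, y)_\mu \ge \|y\|_\mu^2 + C\|y\|_{H^1_0(\Omega)}^2$. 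By the Browder--Minty theorem for bounded, hemicontinuous, monotone, coercive operators on the reflexive space $V$, there exists $y_\varepsilon \in H^1_0(\Omega)$ with $\mathcal{T}_\varepsilon y_\varepsilon = f$.

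The second step is to derive uniform estimates and extract limits. Pairing $\mathcal{T}_\varepsilon y_\varepsilon = f$ with $y_\varepsilon$ in $(\cdot,\cdot)_\mu$ and using the coercivity bound yields $\|y_\varepsilon\|_\mu^2 + C\|y_\varepsilon\|_{H^1_0(\Omega)}^2 \le \|f\|_\mu\,\|y_\varepsilon\|_\mu$, so $\{y_\varepsilon\}$ is bounded in $H^1_0(\Omega)$. Since $\|s_\varepsilon(\cdot)\|_{L^2(\Omega)}\le 1$, the term $B_\varepsilon y_\varepsilon$ is bounded in $L^2(\Omega)$ by $D$; rewriting the equation as $\Delta y_\varepsilon = y_\varepsilon + \gamma_\lambda \chi_\omega P_N y_\varepsilon + B_\varepsilon y_\varepsilon - f$ then shows $\{\Delta y_\varepsilon\}$ is bounded in $L^2(\Omega)$, hence $\{y_\varepsilon\}$ is bounded in $D(\mathscr{A})$. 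Along a subsequence, $y_\varepsilon \rightharpoonup y$ in $H^1_0(\Omega)$ with $\Delta y_\varepsilon \rightharpoonup \Delta y$ in $L^2(\Omega)$, $B_\varepsilon y_\varepsilon \rightharpoonup \xi$ in $L^2(\Omega)$, and, crucially, $y_\varepsilon \to y$ strongly in $L^2(\Omega)$ by the Rellich compact embedding; in particular $\chi_\omega \mathscr{C} y_\varepsilon \to \chi_\omega \mathscr{C} y =: g$ strongly in $L^2(\Omega)$.

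The main obstacle is identifying the weak limit $\xi$ as an element of $\chi_\omega D\operatorname{sign}_{L^2(\Omega)}(g) = By$, i.e. closing the multivalued term. I would argue by cases on $g$. If $g \neq 0$, then for small $\varepsilon$ one has $s_\varepsilon(\chi_\omega \mathscr{C}y_\varepsilon) \to g/\|g\|_{L^2(\Omega)}$ strongly, so $\xi = D\chi_\omega\,g/\|g\|_{L^2(\Omega)} = By$. If $g = 0$, then $\xi = \chi_\omega \xi$ (each $B_\varepsilon y_\varepsilon$ is supported in $\omega$, a property preserved under weak limits since multiplication by $\chi_\omega$ is weakly continuous), and by weak lower semicontinuity of the norm $\|\xi\|_{L^2(\Omega)} \le \liminf \|B_\varepsilon y_\varepsilon\|_{L^2(\Omega)} \le D$; setting $\eta := \xi/D$ gives $\|\eta\|_{L^2(\Omega)}\le 1$ and $\xi = D\chi_\omega \eta \in \chi_\omega D\operatorname{sign}_{L^2(\Omega)}(0) = By$. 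In both cases $\xi \in By$, so letting $\varepsilon \to 0$ in $\mathcal{T}_\varepsilon y_\varepsilon = f$ yields $y - \Delta y + \gamma_\lambda \chi_\omega P_N y + \xi = f$ with $y \in D(\mathscr{A})$ and $\xi \in By$, that is $(I+A)y \ni f$. As $f$ was arbitrary, $R(I+A) = L^2(\Omega)$. As an alternative one may bypass the regularization by writing $B = \partial_\mu \psi$ with $\psi(y)=D\|\chi_\omega \mathscr{C}y\|_{L^2(\Omega)}$ convex and continuous, observing that $A_1$ is linear maximal monotone by Lax--Milgram, and invoking the theorem on sums of maximal monotone operators with one summand bounded and everywhere defined; the regularization route is preferred here as it is self-contained.
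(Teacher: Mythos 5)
Your proposal is correct, and it reaches the same conclusion through a genuinely different implementation of the same overall strategy (regularize the multivalued term, solve the regularized problem, obtain uniform bounds, pass to the limit, identify the limit of the nonlinear term). The differences are worth noting. First, the paper regularizes via the Moreau--Yosida approximation of $\varphi(\cdot)=D\|\cdot\|_{L^2(\Omega)}$, whereas you use the explicit smoothing $s_\varepsilon(g)=g/\sqrt{\|g\|_{L^2(\Omega)}^2+\varepsilon^2}$; both are single-valued, monotone, Lipschitz approximations of $\operatorname{sign}_{L^2(\Omega)}$, and your choice makes the approximation completely elementary. Second, for the regularized problem the paper runs a Schauder fixed-point argument: Lax--Milgram in the $\mu$-inner product for the frozen-coefficient problem, elliptic regularity to land in $H^2(\Omega)$, uniform $L^2$, $H^1$ and $H^2$ bounds to show $T_\sigma(K_M)\subset K_M$ with $K_M$ compact in $L^2(\Omega)$. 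You instead apply Browder--Minty directly to $I+A_1+B_\varepsilon$ viewed as a bounded, hemicontinuous, monotone, coercive operator from $H^1_0(\Omega)$ to its dual in the $\mu$-geometry; this bypasses the fixed-point machinery entirely and leans on exactly the dissipativity estimate \eqref{eq:A_1-dissip} that Proposition \ref{wp1} already provides, which is arguably more economical. Third, and most substantively, the identification of the weak limit of the nonlinear term differs: the paper invokes the strong--weak closedness of the maximal monotone graph $\partial\varphi$ (via the cited result of Showalter) applied to $\xi_\sigma\in\operatorname{sign}_{L^2(\Omega)}(J_\sigma(\chi_\omega\mathscr{C}y_\sigma))$ together with $J_\sigma(\chi_\omega\mathscr{C}y_\sigma)\to\chi_\omega\mathscr{C}y$, a general argument that would work for any maximal monotone nonlinearity; your case analysis ($\chi_\omega\mathscr{C}y\neq 0$ gives strong convergence of $s_\varepsilon$ to the single-valued branch, $\chi_\omega\mathscr{C}y=0$ uses weak lower semicontinuity of the norm and preservation of the support under weak limits) is more elementary and self-contained, but exploits the specific structure of the sign operator. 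Both identifications are valid; one minor point you should make explicit is that the coercivity and monotonicity of $\mathcal{T}_\varepsilon$ must be checked against the duality pairing induced by the $\mu$-pivot space, for which the commutation of $\mathscr{C}$ with $\nabla$ (Appendix \ref{iso_C}) and the orthogonality $(\nabla e_i,\nabla e_j)_{L^2(\Omega)^n}=\tau_i\delta_{ij}$ are what make $\langle -\Delta z,z\rangle_\mu\geq 0$ work --- but this is exactly the content of \eqref{eq:A_1-dissip}, so no gap results.
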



\begin{proof}

Given a $ f \in L^2(\Omega) $, we need to show the existence of $ y \in D(A) $ such that
\begin{equation*}
y + Ay \ni f,
\end{equation*}

\noindent or equivalently

\begin{equation}\label{eq:main_inclusion}
y - \Delta y  +\gamma_\lambda \chi_\omega P_N(y)+\chi_\omega D \text{sign}_{L^2(\Omega)}(\chi_\omega \mathscr{C}y) \ni f.
\end{equation}

\noindent To handle the set-valued signum nonlinearity, we employ a regularization argument based on the Yosida approximation. Let $ \varphi(y) = D \|y\|_{L^2(\Omega)} $, whose subdifferential is $ \partial \varphi(y) = D \cdot \operatorname{sign}_{L^2(\Omega)}(y) $. For $ \sigma > 0 $, the Moreau Regularization  \cite[Chapter IV, Proposition 1.8]{s1997} of $\varphi(y) $ is given by

\begin{align*}
\varphi_\sigma(y) &= \min \left\{ \frac{1}{2\sigma} \|y - z\|^2 + \varphi(z) : z \in L^2(\Omega) \right\} \\[2ex]
&= \frac{1}{\sigma} \|y-J_\sigma y\|^2 + \varphi(J_\sigma(y)) \\[2ex]
&= \frac{\sigma}{2} \|\alpha_\sigma(y)\|^2 + \varphi(J_\sigma(y)),
\end{align*}

\noindent where $\alpha_\sigma $ is the Yosida approximation and $J_\sigma$ is the resolvent. Moreover, we have $ \alpha_\sigma(y) = (\partial \varphi)_\sigma(y)=\nabla\varphi_\sigma(y)$. Now, we define the regularized operator

\begin{equation*}
 B_\sigma(y) := D \chi_\omega \alpha_\sigma(\chi_\omega \mathscr{C} y).   
\end{equation*}

\noindent Note that $ B_\sigma : L^2(\Omega) \to L^2(\Omega) $ is single-valued and satisfies $ \|B_\sigma(y)\|_{L^2(\Omega)} \le D $ for all $ y \in L^2(\Omega) $. Indeed, by the properties of the Yosida approximation,  we have $\|\alpha_\sigma(y)\|_{L^2(\Omega)}=\|\partial \varphi_\sigma(y)\|_{L^2(\Omega)}$ and $\|\partial \varphi_\sigma(y)\|_{L^2(\Omega)}\leq\|\partial \varphi(y)\|_{L^2(\Omega)}$. Then, using the definition of the multivalued sign operator, for any $\eta \in \operatorname{sign}_{L^2(\Omega)}\left(\chi_\omega \mathscr{C} u\right)$, we have to consider two cases

\begin{enumerate}
    \item If \(\chi_\omega \mathscr{C} u \neq 0\) in \(L^2(\Omega)\), then
    
   \begin{equation*}
    \eta = \frac{\chi_\omega \mathscr{C} u}{\|\chi_\omega \mathscr{C} u\|_{L^2(\Omega)}}.   
   \end{equation*}
   
   By construction, \(\|\eta\|_{L^2(\Omega)} = 1\).

    \item If \(\chi_\omega \mathscr{C} u = 0\) in \(L^2(\Omega)\), then:
    
   \begin{equation*}
       \eta \in \left\{ h \in L^2(\Omega) \mid \|h\|_{L^2(\Omega)} \leq 1 \right\}.
   \end{equation*}
   
   Thus, $\|\eta\|_{L^2(\Omega)} \leq 1$.
\end{enumerate} 

\noindent Therefore,  
   \begin{equation*}
      \|D \chi_\omega \eta\|_{L^2(\Omega)} = D \left( \int_\omega |\eta|^2 \, dx \right)^{1/2}\leq D \left(\int_\Omega |\eta|^2 \, dx\right)^{1/2} \leq D.  
   \end{equation*}

\noindent We now consider the regularized equation: find $ y_\sigma \in H_0^1(\Omega) $ such that

\begin{equation} \label{eq:regularized}
y_\sigma - \Delta y_\sigma + \gamma_\lambda \chi_\omega P_N(y_\sigma) = f - B_\sigma(y_\sigma).
\end{equation}


\noindent We proceed via the Schauder Fixed Point Theorem. Consider the map

\begin{align*}
T_\sigma \colon L^2(\Omega) &\longrightarrow L^2(\Omega) \\
 u & \longmapsto  T(u)=y_{\sigma, u}~,
\end{align*}

\noindent where $y_{\sigma, u}$ is solution of

\begin{equation} \label{eq:fixed_pt}
y_{\sigma, u} - \Delta y_{\sigma, u} + \gamma_\lambda \chi_\omega P_N(y_{\sigma, u}) = f - B_\sigma(u).
\end{equation}

\noindent For a regular solution $y$ of \eqref{eq:fixed_pt} we take the $\mu-$inner product with a test function $z$, to derive the variational formulation $a(y,z)=(f-B_\sigma,z)_\mu$,  where the bilinear form $a: ~H_0^1(\Omega)\times H_0^1(\Omega)\rightarrow \mathbb{R}$ is given by 

\begin{equation*}
 a(y, z)=(y,z)_\mu+(\nabla y, \nabla z)_\mu+ (\gamma_\lambda \chi_\omega P_N(y),z)_\mu,
\end{equation*}

\noindent We have that $ a(\cdot, \cdot) $ is continuous and coercive on $ H_0^1(\Omega) $. Indeed, for $y\in H_0^1(\Omega),$ we have

\begin{equation*}
 a(y, y)=\|y\|^2_\mu+\|\nabla y\|^2_\mu+ \gamma_\lambda \mu \| P_N y \|^2_{L^2(\omega)} +\gamma_\lambda\int_\omega P_N (y) P_N^\perp (y) ~dx.
\end{equation*}

\noindent By using \eqref{eq:A_1-dissip}, we have

\begin{equation*}
    a(y, y)\geq\|y\|^2_\mu +C\|y\|^2_{H_0^1(\Omega)}\geq \beta \|y\|^2_{H_0^1(\Omega)},
\end{equation*}
where $\beta=\min\lbrace 1, C \rbrace$.

\noindent For any $ u \in L^2(\Omega) $, the right-hand side of \eqref{eq:fixed_pt}, $ f - B_\sigma(u) $, belongs to $ L^2(\Omega) $. Thus, by the Lax-Milgram Theorem, there exists a unique solution $ y_{\sigma, u} \in H_0^1(\Omega)$ to the variational formulation

\begin{equation*}
 (y,z)_\mu+(\nabla y, \nabla z)_\mu+(\gamma_\lambda \chi_\omega P_N(y),z)_\mu=(f-B_\sigma(u),z)_\mu, \qquad \forall z \in H_0^1(\Omega).
\end{equation*}

\noindent By definition on $\mu -$inner product, we can note that

\begin{equation*}
 (w,v)_\mu=\mu_\lambda(w, P_N v)_{L^2(\Omega)}+(w, P_N^\perp v)_{L^2(\Omega)}=(w,\mu_\lambda P_N v +P_N^\perp v)_{L^2(\Omega)}=(w,\mathscr{C}v)_{L^2(\Omega)}.   
\end{equation*}

\noindent  Now, since $\nabla$ and $\mathscr{C}$ commute, we have

\begin{equation*}
 (y,\mathscr{C}z)_{L^2(\Omega)}+(\nabla y, \nabla \mathscr{C}z)_{L^2(\Omega)}+(\gamma_\lambda \chi_\omega P_N(y),\mathscr{C}z)_{L^2(\Omega)}=(f-B_\sigma(u),\mathscr{C}z)_{L^2(\Omega)}, \qquad \forall z \in H_0^1(\Omega).
\end{equation*}

\noindent Notice that $\mathscr{C}:H_0^1(\Omega)\rightarrow H_0^1(\Omega)$ is an isomorphism (Appendix \ref{iso_C}). Then the variational formulation

\begin{equation*}
 a(y, \mathscr{C} z)=F(\mathscr{C} z), \quad \forall z \in H_0^1   
\end{equation*}

\noindent is equivalent to

\begin{equation*}
  a(y, w)=F(w), \quad \forall w \in H_0^1,  
\end{equation*}

\noindent by taking $w=\mathscr{C} z$ and noting $z=\mathscr{C}^{-1} w \in H_0^1$. It follows that

\begin{equation*}
    (y,w)_{L^2(\Omega)}+(\nabla y, \nabla w )_{L^2(\Omega)}+ \gamma_\lambda(\chi_\omega P_N(y), w)_{L^2(\Omega)}=(f- B_\sigma (u), w)_{L^2(\Omega)},\qquad \forall w\in H_0^1(\Omega).
\end{equation*}

\noindent From which we have $y_{\sigma, u}-\Delta y_{\sigma, u}+\gamma_\lambda \chi_\omega P_N (y_{\sigma, u})=f-B_\sigma(u),$ in the sense of distributions.  

\noindent Since $y_{\sigma, u} \in H_0^1(\Omega)$, the term $\gamma_\lambda \chi_\omega P_N (y_{\sigma, u})$ belongs to $L^2(\Omega)$, as $P_N (y_{\sigma, u}) \in L^2(\Omega)$ and $\chi_\omega$ is bounded. Therefore, the entire right-hand side $f-B_\sigma(u)-\gamma_\lambda \chi_\omega P_N (y_{\sigma, u})$ lies in $L^2(\Omega)$. By standard elliptic regularity theory~\cite[Theorem 9.25]{b2010}, we conclude that $y_{\sigma, u} \in H^2(\Omega)$. Moreover, since $y_{\sigma, u} \in H^2(\Omega)$, we get

\begin{equation}\label{eq:ec_punt}
 -\Delta y_{\sigma, u}+y_{\sigma, u}+\gamma_\lambda \chi_\omega P_N y_{\sigma, u}=f-B_\sigma(u) \text{ a.e. on }\Omega.   
\end{equation}

\noindent Now, Let  $M > 0$, to be selected later, and define $K_M := \{ v \in H_0^1(\Omega)\cap H^2(\Omega): \|v\|_{H^2(\Omega)} \le M \}$. 

\begin{itemize}
    \item Upper bound for $\|y_{\sigma,u}\|_{L^2(\Omega)}$
independent of $\sigma\in(0,\infty)$.
\end{itemize}

\noindent Let $y_{\sigma, u}\in K_M$, testing \eqref{eq:fixed_pt} with $ y_{\sigma, u} $ in the $\mu-$ inner product gives

\begin{equation}\label{eq:bound_H1}
(y_{\sigma,u} , y_{\sigma, u})_\mu+(-\Delta y_{\sigma, u},y_{\sigma, u})_\mu+\gamma_\lambda(\chi_\omega P_N y_{\sigma, u},y_{\sigma, u})_\mu=\left( f-B_\sigma(u), y_{\sigma, u}\right)_{\mu} .    
\end{equation}

\noindent By using Cauchy-Schwarz and Holder inequalities in the right-hand side of \eqref{eq:bound_H1}, we obtain the a priori estimate
\begin{equation}\label{eq:bound_right_L2}
 \begin{aligned}
\left( f-B_\sigma(u), y_{\sigma, u}\right)_{\mu} &\leq \mu\left( f-B_\sigma(u), P_N y_{\sigma, u}\right)_{L^2(\Omega)}+\left( f-B_\sigma(u), P_N^\perp y_{\sigma, u}\right)_{L^2(\Omega)}\\[2ex]
&\leq \mu\left\| f-B_\sigma(u)\|_{L^2(\Omega)}  \| P_N y_{\sigma, u}\right\|_{L^2(\Omega)}+\left\| f-B_\sigma(u)\right\|_{L^2(\Omega)} \left\| P_N^\perp y_{\sigma, u}\right\|_{L^2(\Omega)}\\[2ex]
&\leq \left(\frac{\mu}{2}+\frac{1}{2}\right)\|f-B_\sigma (u)\|^2_{L^2(\Omega)}+\frac{\mu_\lambda}{2}\|P_N y_{\sigma, u}\|^2_{L^2(\Omega)}+\frac{1}{2}\|P_N^\perp y_{\sigma, u}\|^2_{L^2(\Omega)}\\[2ex]
&\leq \frac{1}{2}(\mu_\lambda +1)\left(2\|f\|^2_{L^2(\Omega)}+2\|B_\sigma (u)\|^2_{L^2(\Omega)}\right)+\frac{\mu_\lambda}{2}\|P_N y_{\sigma, u}\|^2_{L^2(\Omega)}+\frac{1}{2}\|P_N^\perp y_{\sigma, u}\|^2_{L^2(\Omega)}.
\end{aligned}   
\end{equation}

\noindent On the other hand, note that $(y_{\sigma,u} , y_{\sigma, u})_\mu=\mu_\lambda\|P_N y_{\sigma, u}\|_{L^2(\Omega)}^2+\|P_N^\perp y_{\sigma, u}\|_{L^2(\Omega)}^2$, and by using \eqref{eq:A_1-dissip} in \eqref{eq:bound_H1} we get $(-\Delta y_{\sigma, u},y_{\sigma, u})_\mu+\gamma_\lambda(\chi_\omega P_N y_{\sigma, u},y_{\sigma, u})_\mu\geq0$. Then, it follows

\begin{equation*}
\begin{aligned}
  &\mu_\lambda\|P_N y_{\sigma, u}\|_{L^2(\Omega)}^2+\|P_N^\perp y_{\sigma, u}\|_{L^2(\Omega)}^2\\&\leq   (\mu_\lambda +1)\left(\|f\|^2_{L^2(\Omega)}+\|B_\sigma (u)\|^2_{L^2(\Omega)}\right)+\frac{\mu_\lambda}{2}\|P_N y_{\sigma, u}|^2_{L^2(\Omega)}+\frac{1}{2}\|P_N^\perp y_{\sigma, u}\|^2_{L^2(\Omega)}.
  \end{aligned}
\end{equation*}

\noindent Then

\begin{equation*}
  \frac{\mu_\lambda}{2}\|P_N y_{\sigma, u}\|^2_{L^2(\Omega)}+\frac{1}{2}\|P_N^\perp y_{\sigma, u}\|^2_{L^2(\Omega)}\leq   (\mu_\lambda +1)\left(\|f\|^2_{L^2(\Omega)}+\|B_\sigma (u)\|^2_{L^2(\Omega)}\right).
\end{equation*}

\noindent Since $0<C<1$ and $\mu_\lambda=\frac{1}{C^2}>1 $, then $\mu_\lambda>1$, and by using the boundedness of $B_\sigma$, we get

\begin{equation}\label{eq:bound_L2}
    \| y_{\sigma, u}\|_{L^2}^2 \leq  4\mu_\lambda \left(\|f\|_{L^2(\Omega)}^2+D^2\right) .
\end{equation}

\begin{itemize}
    \item Upper bound for $\|y_{\sigma,u}\|_{H^1(\Omega)}$
independent of $\sigma\in(0,\infty)$.
\end{itemize}

\noindent Testing \eqref{eq:ec_punt} with $ y_{\sigma, u} $ in the $L^2$- inner product gives

\begin{equation}\label{eq:bound_H1-L2}
(y_{\sigma,u} , y_{\sigma, u})+(-\Delta y_{\sigma, u},y_{\sigma, u})=\left( f-B_\sigma(u), y_{\sigma, u}\right) -\gamma_\lambda(\chi_\omega P_N y_{\sigma, u},y_{\sigma, u}).    
\end{equation}

\noindent On the left side, by integrating by parts, we have

\begin{align*}
    (y_{\sigma,u} , y_{\sigma, u})+(-\Delta y_{\sigma, u},y_{\sigma, u})=\|y_{\sigma,u}\|^2_{L^2(\Omega)}+\|\nabla y_{\sigma,u}\|^2_{L^2(\Omega)}=\|y\|_{H^1(\Omega)}^2.
\end{align*}

\noindent On the right side, we have 
\begin{align*}
    \left( f-B_\sigma(u), y_{\sigma, u}\right) -\gamma_\lambda(\chi_\omega P_N y_{\sigma, u},y_{\sigma, u})&\leq \|f-B_\sigma(u)\|_{L^2(\Omega)}\|y_{\sigma, u}\|_{L^2(\Omega)}+\gamma_\lambda\|P_N y_{\sigma, u}\|_{L^2(\omega)}\|y_{\sigma, u}\|_{L^2(\omega)}\\[2ex]
    &\leq \frac{1}{2}\|f-B_\sigma(u)\|^2_{L^2(\Omega)}+\frac{1}{2}\|y_{\sigma, u}\|^2_{L^2(\Omega)}+\gamma_\lambda\|y_{\sigma, u}\|^2_{L^2(\Omega)}\\[2ex]
    &\leq \|f\|^2_{L^2(\Omega)}+\|B_\sigma(u)\|^2_{L^2(\Omega)}+\left(\frac{1}{2}+\gamma_\lambda\right)\|y_{\sigma, u}\|^2_{L^2(\Omega)}.
\end{align*}

\noindent Therefore, using the previous inequalities, the bound of $B_\sigma(u)$ and \eqref{eq:bound_L2}, we have

\begin{align*}
\|y_{\sigma,u}\|_{H^1(\Omega)}^2&\leq\|f\|^2_{L^2(\Omega)}+D^2+4\mu_\lambda \left(\frac{1}{2}+\gamma_\lambda\right)  \left(\|f\|_{L^2(\Omega)}^2+D^2\right)\\[2ex]
&=\left(1+4\mu_\lambda\left(\frac{1}{2}+\gamma_\lambda\right) \right)\left(\|f\|_{L^2(\Omega)}^2+D^2\right).
\end{align*}

\begin{itemize}
    \item Upper bound for $\|y_{\sigma,u}\|_{H^2(\Omega)}$
independent of $\sigma\in(0,\infty)$.
\end{itemize}

\noindent By using \eqref{eq:ec_punt} and integrating on $\Omega$, we get

\begin{equation*}
\begin{aligned}
    \|\Delta y_{\sigma,u}\|^2_{L^2(\Omega)}&=\left\|y_{\sigma,u}+\gamma_\lambda\chi_\omega P_N y_{\sigma,u} +B_\sigma (y_{\sigma,u})-f\right\|_{L^2(\Omega)}^2 \\[2ex]
    &\leq 4\left(\|y_{\sigma,u}\|_{L^2(\Omega)}^2+\gamma_\lambda^2\|\chi_\omega P_N y_{\sigma,u}\|_{L^2(\Omega)}^2+\|f\|_{L^2(\Omega)}^2 +\|B_\sigma (y_{\sigma,u})\|_{L^2(\Omega)}^2\right)\\[2ex]
    &\leq 4\left((1+\gamma_\lambda^2)\|y_{\sigma,u}\|_{L^2(\Omega)}^2+\|f\|_{L^2(\Omega)}^2 +\|B_\sigma (y_\sigma)\|_{L^2(\Omega)}^2\right).
    \end{aligned}
\end{equation*}

\noindent Therefore, by using the bound of $B_\sigma$ and $y_\sigma$, we get

\begin{equation*}
\begin{aligned}
    \|\Delta y_\sigma\|^2_{L^2(\Omega)} &\leq 16(1+\gamma_\lambda^2)\mu_\lambda\left(\|f\|_{L^2(\Omega)}^2+D^2\right)+4\left(\|f\|_{L^2(\Omega)}^2 +D^2\right)\\[2ex]
    &=4(4(1+\gamma_\lambda^2)\mu_\lambda+1)\left(\|f\|_{L^2(\Omega)}^2+D^2\right).
    \end{aligned}
\end{equation*}

\noindent Choosing $M\geq  4(4(1+\gamma_\lambda^2)\mu_\lambda+1) \left(\|f\|_{L^2(\Omega)}^2+D^2\right)$, we have $T_{\sigma}(K_M)\subset K_M.$ Since $\{y_{\sigma,u}\}\subset K_M$ is bounded in $H^2(\Omega)$, by the Banach–Alaoglu theorem, there exists a subsequence, still denoted $y_{\sigma,u}$, converging weakly in $H^2(\Omega)$ to some $w \in H^2(\Omega)$, i.e., $y_{\sigma,u}\rightharpoonup w$ in $H^2(\Omega)$. By Rellich-Kondrachov theorem~\cite[Theorem 9.16, page 285]{HB2010}, we know that the embedding $H^2(\Omega) \hookrightarrow L^2(\Omega)$ is compact, so $y_{\sigma,u} \to w$ strongly in $L^2(\Omega)$. Moreover, since $y_{\sigma,u} \in H^1_0(\Omega)$ and $H^1_0(\Omega)$ is a closed subspace of $H^1(\Omega)$, and strong $L^2$ convergence with boundedness in $H^1$ implies weak $H^1$ convergence to the same limit, we get $w \in H^1_0(\Omega)$. Therefore, $w \in K_M$, so $K_M$ is closed in $L^2(\Omega)$. Thus, we conclude that $K_M$ is compact $L^2(\Omega)$.

\noindent By the Schauder Fixed Point Theorem , $T_\sigma$ has a fixed point $y_\sigma\in K_M\subset H^2(\Omega)\cap H_0^1(\Omega),$ which is a solution to \eqref{eq:fixed_pt}. 

\vspace{0.5cm}

Accordingly, we have shown


\begin{lem}\label{lemma_regularized}
For any $\sigma \in(0, \infty)$ there exists $y_\sigma \in H^2(\Omega)\cap H_0^1(\Omega)$ such that $y_\sigma-\Delta y_\sigma=f -B\sigma (y_\sigma)-\gamma_\lambda\chi_\omega P_N y_\sigma$ for almost every $x \in\Omega$.
\end{lem}


\noindent Let us consider the $y_\sigma$ given by Lemma \ref{lemma_regularized}. We proceed to prove
that $R(I+A) = L^2(\Omega) $ by analyzing what happens to $y_\sigma$ as $\sigma\rightarrow 0^+.$ By using a priori estimates as before, we obtain that the sequence $ \{y_\sigma\} $ is uniformly bounded in $ H_0^1(\Omega) $, i.e.,

\begin{align*}
\|y_{\sigma}\|_{H^1(\Omega)}^2&\leq M.
\end{align*}

\noindent Therefore, there exists a subsequence (still denoted by $ \{y_\sigma\} $) and a function $ y \in H_0^1(\Omega) $ such that

\begin{align*}
y_\sigma &\rightharpoonup y \quad \text{in } H_0^1(\Omega), \\
y_\sigma &\to y \quad \text{in } L^2(\Omega).
\end{align*}

\noindent By the continuity of the operators $ \mathscr{C} $ and $ \chi_\omega $, it follows that $ \chi_\omega \mathscr{C} y_\sigma \to \chi_\omega \mathscr{C} y $ in $ L^2(\Omega) $.

\noindent From the bound $ \|B_\sigma(y_\sigma)\|_{L^2(\Omega)} \le D $, there exists a $ g \in L^2(\Omega) $ and a further subsequence such that

\begin{equation*}
B_\sigma(y_\sigma) \rightharpoonup g \quad \text{in } L^2(\Omega).    
\end{equation*}

\noindent We can now pass to the limit in the weak formulation of \eqref{eq:regularized}. For any test function $ z \in H_0^1(\Omega) $, we have

\begin{equation*}
\int_\Omega y_\sigma z + \int_\Omega\nabla y_\sigma  \nabla z \, dx + \gamma_\lambda \int_\omega P_N(y_\sigma) z \, dx = \int_\Omega (f - B_\sigma(y_\sigma)) z \, dx.
\end{equation*}

\noindent Taking the limit $ \sigma \to 0^+ $, we obtain

\begin{equation*}
\int_\Omega y z + \nabla y \cdot \nabla z \, dx + \gamma_\lambda \int_\omega P_N(y) z \, dx = \int_\Omega (f - g) z \, dx.
\end{equation*}

\noindent This implies the equation holds in the sense of distributions,  we have

\begin{equation} \label{eq:limit_equation}
y - \Delta y + \gamma_\lambda \chi_\omega P_N y = f - g .
\end{equation}

 Accordingly, Lemma \ref{lemma_regularized} and the previous arguments yield


\begin{lem}\label{lemma:problem_with_g}
 There exists $y \in H^2(\Omega)\cap H_0^1(\Omega)$ such that $y-\Delta y+\gamma_\lambda \chi_\omega P_N y = f - g,$ for almost every $x \in\Omega$.   
\end{lem}


\noindent In view of Lemma \ref{lemma:problem_with_g}, we see that in order to complete the proof of $R(I+\mathcal{A})=L^2(0, L)$ it remains to show that $ g \in D \chi_\omega  \operatorname{sign}_{L^2(\Omega)}(\chi_\omega \mathscr{C} y) $. Let $ \xi_\sigma = \alpha_\sigma(\chi_\omega \mathscr{C} y_\sigma) $, so that $ B_\sigma(y_\sigma) = D \chi_\omega \xi_\sigma $. By the properties of the Yosida approximation, $ \xi_\sigma \in \partial \varphi(J_\sigma(\chi_\omega \mathscr{C} y_\sigma)) = \operatorname{sign}_{L^2(\Omega)}(J_\sigma(\chi_\omega \mathscr{C} y_\sigma)) $. Consequently, $ \|\xi_\sigma\|_{L^2(\Omega)} \le 1 $. Thus, on a subsequence,

\begin{equation*}
 \xi_\sigma \rightharpoonup \xi \quad \text{in } L^2(\Omega), \quad \text{with } \|\xi\|_{L^2(\Omega)} \le 1.   
\end{equation*}

\noindent Furthermore, we have the estimate

\begin{equation*}
\| \chi_\omega \mathscr{C} y_\sigma - J_\sigma(\chi_\omega \mathscr{C} y_\sigma) \|_{L^2(\Omega)} = \sigma \|\alpha_\sigma(\chi_\omega \mathscr{C} y_\sigma)\|_{L^2(\Omega)} = \sigma \|\xi_\sigma\|_{L^2(\Omega)} \le \sigma \to 0,    
\end{equation*}

\noindent and since $ \chi_\omega \mathscr{C} y_\sigma \to \chi_\omega \mathscr{C} y $, it follows that $ J_\sigma(\chi_\omega \mathscr{C} y_\sigma) \to \chi_\omega \mathscr{C} y $ in $ L^2(\Omega) $. Indeed,

\begin{equation}
\begin{aligned}
 \left\|J_\sigma(\chi_\omega \mathscr{C} y_\sigma)-\chi_\omega \mathscr{C} y\right\|_{L^2(\Omega)} & =\left\|J_\sigma(\chi_\omega \mathscr{C} y_\sigma)-\chi_\omega \mathscr{C} y_\sigma+\chi_\omega \mathscr{C} y_\sigma-\chi_\omega \mathscr{C} y\right\|_{L^2(\Omega)} \\[2ex]
& \leq\left\|J_\sigma(\chi_\omega \mathscr{C} y_\sigma)-\chi_\omega \mathscr{C} y_\sigma\right\|_{L^2(\Omega)}+\left\|\chi_\omega \mathscr{C} y_\sigma-\chi_\omega \mathscr{C} y\right\|_{L^2(\Omega)}\\[2ex]
&\rightarrow 0.
\end{aligned}
\end{equation}

\noindent Now, by using \cite[Proposition 1.6.]{s1997} we have that if $ \xi_\sigma \in \operatorname{sign}(J_\sigma(\chi_\omega \mathscr{C} y_\sigma)) $, $ \xi_\sigma \rightharpoonup \xi $, and $ J_\sigma(\chi_\omega \mathscr{C} y_\sigma) \to \chi_\omega \mathscr{C} y $, it follows that $\xi \in \operatorname{sign}_{L^2(\Omega)}(\chi_\omega \mathscr{C} y)  $. Finally, since $B_\sigma\left(y_\sigma\right)=D \chi_\omega \xi_\sigma \rightarrow D \chi_\omega \xi$ and we have $B_\sigma\left(y_\sigma\right) \rightharpoonup g$, the uniqueness of the weak limit implies $g=D \chi_\omega \xi$. Substituting into \eqref{eq:limit_equation}, we conclude that $ y $ satisfies \eqref{eq:main_inclusion}, completing the proof. 

\end{proof}


\begin{rem}
An interesting approach to proving that the operator \eqref{operator} is maximal monotone is presented in \cite[Lemma 3.]{lr2025}, which consist in proving that $B \operatorname{sign} (B)$ is maximal monotone, where the lineal closed operator $B$ considered in \cite{lr2025}, has to satisfies $ B^2 = B$ and $ B = B^*$. However, this method fails in our setting, since $\chi_\omega \mathscr{C}$ under consideration in our work fails to satisfy the requisite conditions.
\end{rem}


\noindent In  order to complete the proof of Theorem \ref{main_theorem} we proceed to show that \eqref{P} with the feedback law \eqref{feedback} has a unique weak solution $y \in C\left([0, \infty) ; L^2(\Omega)\right)$ and it satisfies \eqref{eq:decay_solution}.

\noindent Let us return to \eqref{system}. In view of Proposition \ref{wp1} and Proposition \ref{wp2}, we can apply \cite[Chapter IV, Lemma 1.3]{s1997} to conclude that the operator $\mathcal{A}$ defined in \eqref{operator} is maximal monotone. For the moment, let us assume that $d \in W^{1,1}\left(0, \infty ; L^2(\Omega)\right)$ and $y_0 \in D(A)$. Then,~\cite[Chapter IV, Theorem 4.1]{s1997} gives the existence of a unique $y \in W^{1,1}\left(0, \infty ; L^2(\Omega)\right)$ such that

\begin{equation}\label{eq:resulting_system}
\begin{cases}
 y^{\prime}(t)+A y(t) \ni \chi_\omega d(t) \text { for almost every } t>0 \\
 y(t) \in D(A) \text { for every } t \geq 0\\
 y(0)=y_0 .
\end{cases}    
\end{equation}


\noindent Given $(d, \widehat{d}) \in W^{1,1}\left(0, \infty ; L^2(\Omega)\right)^2$ and $\left(y_0, \widehat{y}_0\right) \in D(A)^2$, let us take the corresponding unique solution $(y, \widehat{y}) \in W^{1,1}\left(0, \infty ; L^2(\Omega)\right)^2$. Then, by \cite[Chapter IV, (4.12)]{s1997}, we have

\begin{equation}\label{eq:Showalter_4.12}
\|y(t, \cdot)-\widehat{y}(t, \cdot)\|_{L^2(\Omega)} \leq\left\|y_0-\widehat{y}_0\right\|_{L^2(\Omega)}  +\int_0^t\|d(s, \cdot)-\widehat{d}(s, \cdot)\|_{L^2(\omega)}~ d s, ~t \geq 0.
\end{equation}

\noindent We may use \eqref{eq:Showalter_4.12} to define, as in \cite[Page 183]{s1997} for instance, the notion of weak solution of \eqref{system}.
\begin{defi} A generalized solution of \eqref{system} is a function $y \in C([0, T], L^2(\Omega))$ for which there exists a sequence of (absolutely continuous) solutions $y_n$ of

\begin{equation*}
\frac{d y_n}{d t}+A\left(y_n\right) \ni \chi_\omega d_n,\quad n \geq 1    
\end{equation*}
with $d_n \rightarrow d$ in $L^1(0, T ; L^2(\Omega))$ and $y_n \rightarrow y$ in $C([0, T], L^2(\Omega))$.
\end{defi}

Taking into account the density of $W^{1,1}\left(0, \infty ; L^2(\Omega)\right)$ in $L^1\left(0, \infty ; L^2(\Omega)\right)$ and of $D(A)$ in $L^2(\Omega)$, we have that generalized solution is well defined.

Therefore, in virtue of \cite[Chapter IV, Theorem 4.1A]{s1997} we have that \eqref{eq:closed_loop_system} has a unique weak solution $y \in C\left([0, \infty) ; L^2(\Omega)\right)$ provided that $d \in L^1\left(0, \infty ; L^2(\Omega)\right)$ and $y_0 \in L^2(\Omega)$. Moreover, all the formal computations done in Section \ref{feedback_design} make sense, implying that \eqref{eq:decay_solution} is satisfied. 

Accordingly, we have shown
\begin{prop}
Let $d \in L^1(0, \infty ; L^2(\Omega))$ and $y_0 \in L^2(\Omega)$. Then, there exists a unique $y=y(t, x)$ such that:
\begin{enumerate}
    \item $y \in C([0, \infty) ; L^2(\Omega))$.
    \item  $y(0)=y_0$.
    \item  Any two generalized solutions of the problem \eqref{system} satisfies the estimate \eqref{eq:Showalter_4.12}.
    \item  $y(t, x)$ satisfies the exponential decay \eqref{eq:decay_solution} of Theorem \ref{main_theorem}.
\end{enumerate}
\end{prop}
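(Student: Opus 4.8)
The plan is to assemble the two structural facts already proved, upgrade them to maximal monotonicity, invoke the abstract generation theory for maximal monotone operators, and then transfer both well-posedness and the exponential decay from regular data to the general setting by a density argument. First I would combine Proposition \ref{wp1} (the operator $A$ is monotone) and Proposition \ref{wp2} (the range condition $R(I+A)=L^2(\Omega)$): by the Minty-type characterization \cite[Chapter IV, Lemma 1.3]{s1997}, these two properties together imply that $A$ is maximal monotone on $L^2(\Omega)$. This is exactly the hypothesis required to enter the Crandall--Liggett / Showalter generation theory, so it is the natural pivot of the argument.

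Next, for regular data $d\in W^{1,1}(0,\infty;L^2(\Omega))$ and $y_0\in D(A)$, I would apply \cite[Chapter IV, Theorem 4.1]{s1997} to obtain a unique strong solution $y\in W^{1,1}(0,\infty;L^2(\Omega))$ of \eqref{eq:resulting_system} with $y(t)\in D(A)$ for every $t\ge 0$. The quantitative core is the $L^1$-contraction estimate \cite[Chapter IV, (4.12)]{s1997}, which for any pair of such solutions yields \eqref{eq:Showalter_4.12}; equivalently, the solution map $(y_0,d)\mapsto y$ is Lipschitz from $D(A)\times W^{1,1}$, with the $L^2\times L^1$ topology, into $C([0,T];L^2(\Omega))$. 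I would then extend this map by density: since $W^{1,1}(0,\infty;L^2(\Omega))$ is dense in $L^1(0,\infty;L^2(\Omega))$ and $D(A)$ is dense in $L^2(\Omega)$, given $d\in L^1$ and $y_0\in L^2$ I pick approximants $d_n\to d$ in $L^1(0,T;L^2(\Omega))$ and $y_{0,n}\to y_0$ in $L^2(\Omega)$ with $d_n\in W^{1,1}$, $y_{0,n}\in D(A)$. By \eqref{eq:Showalter_4.12} the associated strong solutions $y_n$ are Cauchy in $C([0,T];L^2(\Omega))$, and their limit $y$ is, by construction, the unique generalized solution in the sense of the Definition preceding the statement. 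This establishes items (1)--(3) and is precisely the content of \cite[Chapter IV, Theorem 4.1A]{s1997}.

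For the exponential decay of item (4), I would first observe that along a strong solution all the formal manipulations of Section \ref{feedback_design} become rigorous, since $y(t)\in D(A)\subset H_0^1(\Omega)$ gives exactly the regularity needed to legitimize the duality pairing $\langle P_N^\perp y,y_t\rangle_{H_0^1(\Omega),H^{-1}(\Omega)}$ and the differentiation of the Frequency Lyapunov function. Consequently $\frac{d}{dt}V(y(t))\le -\lambda V(y(t))$ and hence \eqref{eq:decay_solution} hold for each approximant $y_n$. Because $y_n\to y$ uniformly in $t$ on $[0,T]$ in the $L^2(\Omega)$ norm, the pointwise bound \eqref{eq:decay_solution} is stable under this limit and passes to the generalized solution $y$, completing item (4).

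The step I expect to be the main obstacle is the rigorous justification of the Lyapunov differential inequality for the strong solutions delivered by the abstract theory. The decay computation in Section \ref{feedback_design} was carried out formally \emph{assuming suitable regularity of the solution}; what must be verified is that $t\mapsto V(y(t))$ is absolutely continuous along trajectories with only $y\in W^{1,1}(0,\infty;L^2(\Omega))$ and $y(t)\in D(A)$, so that the chain rule and the splitting into the $P_N$ and $P_N^\perp$ components are valid almost everywhere. Once this is secured, integrating the inequality and taking the uniform limit $y_n\to y$ is routine, and the proof of the Proposition---and thereby of Theorem \ref{main_theorem}---is complete.
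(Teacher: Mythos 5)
Your proposal follows essentially the same route as the paper: maximal monotonicity of $A$ via Propositions \ref{wp1} and \ref{wp2} and \cite[Chapter IV, Lemma 1.3]{s1997}, strong solutions for regular data from \cite[Chapter IV, Theorem 4.1]{s1997}, the contraction estimate \eqref{eq:Showalter_4.12}, and a density argument yielding generalized solutions via \cite[Chapter IV, Theorem 4.1A]{s1997}, with the decay \eqref{eq:decay_solution} obtained by making the Section \ref{feedback_design} computations rigorous along strong solutions and passing to the uniform limit. In fact you are somewhat more explicit than the paper on the final point (absolute continuity of $t\mapsto V(y(t))$ and stability of the decay bound under the limit $y_n\to y$), which the paper dispatches in one sentence.
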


\noindent Concluding the proof of the well-posedness part of Theorem \ref{main_theorem}.



\section{Concluding Remarks and Perspectives}
\label{c}

In this work, we presented a rapid stabilization strategy for the heat equation under unknown disturbances with a localized internal feedback law. We have employed the method introduced by \cite{X2024} combined with the sign multivalued operator \eqref{sign} to design suitable feedback laws that guarantee exponential decay with an arbitrary decay rate. The assumptions made on the unknown disturbance (Assumptions \ref{A_1} and \ref{A_2}) are the standard ones that can be found in the literature. The corresponding closed-loop system \eqref{eq:closed_loop_system} is formulated as a differential inclusion, and its well-posedness is proved via maximal monotone operator theory. The main difficulty in the application of the maximal monotone operator theory comes from the nonlinearity of the sign operator. To deal with the monotonicity, a new inner product related to the Frequency Lyapunov Functional was used; the maximality was treated by regularizing the operator via the Yosida Approximation and then applying a fixed-point argument.

\noindent Following the steps of the previous section, we could prove the exponential stability in the case where a potential is included that makes the system unstable, i.e., the equation

\begin{equation}\label{Pot}
  \left\{\begin{array}{ll}
y_t- \Delta y+ a(x)y=\chi_\omega(u+d),\quad&(t, x) \in(0, \infty) \times \Omega \\
y(t, x)=0,&(t, x) \in(0, \infty) \times \partial \Omega \\
y(0, x)=y_0(x), &x \in \Omega
\end{array}\right.
\end{equation}
where $c\in L^\infty(\Omega)$ is an extra source of instability. In that case, we get a section of the eigenvalues to be non-positive. The calculations follow in a similar way to the case without a potential, the election of the same feedback law and Lyapunov functional works, considering the same total control as before, given by the decomposition 
$u(t, x)  = \tilde{u}+\hat{u}$. The crucial step in this case is the election of the constants $\mu_\lambda $ and $\gamma_\lambda$, where the proposed election is 

\begin{equation}
\gamma_\lambda := \frac{1}{C}(\lambda - \tau_1), \qquad \mu_\lambda := \frac{\gamma_\lambda^2}{\lambda^2} = \frac{(\lambda - \tau_1)^2}{\lambda^2 C^2},     
\end{equation}

\noindent where $\tau_1$ is the first of the non-positive eigenvalues, i.e., $ -\tau_1 \geq \dots \geq-\tau_j\geq 0$.

\noindent Another interesting approach is to study the problem in divergence form, i.e.,

\begin{equation*}
\left\{\begin{array}{ll}
y_t - div\big(\gamma(x)\cdot\nabla y\big) = \chi_\omega(u+d), & (t, x) \in (0, \infty) \times \Omega, \\[4pt]
y(t, x) = 0, & (t, x) \in (0, \infty) \times \partial \Omega, \\[4pt]
y(0, x) = y_0(x), & x \in \Omega,
\end{array}\right.
\end{equation*}
where $\gamma \in C^2(\bar{\Omega})$ with $\gamma(x) \geq \gamma_0>0$ in $\Omega$, for which spectral inequalities remain fundamentally important. In particular, a case of considerable interest is based on the spectral inequality introduced by Osses and Triki \cite{ot2025}, which refines the classical spectral inequality of \cite{lr1995}, also employed in \cite{X2024}. Their result not only recovers the classical base case, but the bound in terms of the frequency number is more precise since it depends on the coefficients of the linear combination of the eigenfunctions.

\noindent The analysis developed in this paper left some interesting open problems, and one immediate extension would be to develop general results for parabolic systems with analytic semigroup generators. There could be also interesting to consider systems of coupled PDEs. Another challenging direction involves the rapid stabilization of the multidimensional wave equation under geometric control conditions, where only observability inequalities are available.



\section*{Conflicts of Interest}
Te authors declare no conflicts of interest

\section*{Acknowledgments}

Christian Calle received financial support by ANID BECAS/DOCTORADO NACIONAL 21252433 and PIIC 048/2025  provided by the Universidad Técnica Federico Santa María. Patricio Guzmán was supported by FONDECYT 11240290. Hugo Parada is currently supported by the Agence Nationale de la Recherche through the QuBiCCS project ANR-24-CE40-3008.



\appendix

\section{Weak Spectral Inequality Proof}

\begin{proof}[Proof of the weak spectral inequality \eqref{spect_ineq}]
We argue by contradiction. Let's assume that \eqref{spect_ineq} is false, i.e,

\begin{equation}\label{cont_espec}
 \forall m \in \mathbb{N}, \exists a^m:=\left(a_1^m, a_2^m, \ldots, a_n^n\right)~:~\left\|\sum_{n=1}^N a_n^m e_n\right\|_{L^2(\omega)}^2<\frac{1}{m} \sum_{n=1}^N\left(a_n^m\right)^2 \text {. }
\end{equation}

\noindent Then, let's consider a sequence $v^m$ (renormalize)

\begin{equation*}
{v^m}:=\frac{a^m}{\left\|a^m\right\|_{\mathbb{R}^N}} \quad, \quad m=1,2, \ldots    
\end{equation*}

\noindent Note that

\begin{equation}\label{renormalize}
\left\|v^m\right\|_{\mathbb{R}^N}=\frac{\left\|a^m\right\|_{\mathbb{R}^N}}{\left\|a^m\right\|_{\mathbb{R}^N}}=1
\end{equation}

\noindent and

\begin{equation}\label{ineq_1}
\left\|\sum_{n=1}^N v_n^m e_n\right\|_{L^2(\omega)}=\left\|\sum_{n=1}^N \frac{a_n^m e_n}{\left\|a^m\right\|_{\mathbb{R}^N}}\right\|_{L^2(\omega)}^2=\frac{1}{\left\|a^m\right\|_{\mathbb{R}^N}^2}\left\|\sum_{n=1}^N a_n^m e_n\right\|<\frac{1}{m\left\|a^m\right\|_{\mathbb{R}^N}^2} \left\|a^m\right\|^2=\frac{1}{m}    . 
\end{equation}

\noindent In particular, the functions $\left\{v^m\right\}_{m \in \mathbb{N}}$ are bounded in $L^2(\omega)$. Then there exists a convergent subsequence $\left\{v^{m_j}\right\}_{j \in \mathbb{N}} \subset\left\{v^m\right\}_{m\in \mathbb{N} }$ such that

\begin{equation*}
 v^{m_j} \rightarrow v \quad \text { as } \quad m \rightarrow \infty \quad \text { in } \mathbb{R}^N,   
\end{equation*}

\noindent by using \eqref{renormalize}, we have that

\begin{equation}\label{eq1_esp_ineq}
 \lim _{m \rightarrow \infty}\left\|v^m\right\|_{\mathbb{R}^N}=\|v\|_{\mathbb{R}^N}=1   .
\end{equation}

\noindent In a similar way, taking the limit as $m \rightarrow \infty$ in \eqref{ineq_1}, we get

\begin{equation*}
\sum_{n=1}^N v_n e_n=0,\quad\text{in }\omega.    
\end{equation*}

\noindent Since $e_n$ is a continuous function, we have that there exists an uncountable set $\mathcal{M} \subset \omega$  such that

\begin{equation*}
\forall x \in \mathcal{M}, \exists n=n_x: e_n(x) \neq 0 .    
\end{equation*}

\noindent Note that if we have that $e_n=0$ in $\omega$, by the unique continuation property, we have that $e_n=0$ in $\Omega$, which is not possible.

\noindent Therefore,

\begin{equation}\label{eq2_esp_ineq}
\sum_{n=1}^N v_n e_n\left(x_0\right)=0 , \quad \forall x_0 \in \mathcal{M}.    
\end{equation}

\noindent \noindent As  $e_n\left(x_0\right) \neq 0$ for all $n=n_x$, and $v_n$ solves \eqref{eq2_esp_ineq}, we have that

\begin{equation*}
v_1, \ldots, \ldots, v_N=0,
\end{equation*}

\noindent which is a contradiction because by \eqref{eq1_esp_ineq}, we have that $\|v_n\|_{\mathbb{R}^{N}}=1$
Thus, we get

\begin{equation*}
Y_N^{\top} J_N Y_N \geq C_\lambda\left\|Y_N\right\|_2^2.    
\end{equation*}

\end{proof}


\section{Inner product}\label{App:inner_product}

\noindent We define the inner product $ (\cdot, \cdot)_{\mu} $ on $ L^2(\Omega) $ by

\begin{equation*}
(u, v)_{\mu} = \mu (P_N u, P_N v) + (P_N^\perp u, P_N^\perp v),
\end{equation*}

\noindent This bilinear form induces a norm $ \|u\|_\mu = \sqrt{(u, u)_\mu} = \sqrt{ \mu \|P_N u\|^2 + \|P_N^\perp u\|^2 } $, which is equivalent to the original norm on $ H $. Since $\mu>1$, we have $\|u\|_{L^2(\Omega)}^2 \leq(u, u)_\mu \leq \mu\|u\|_{L^2(\Omega)}^2$, it follows that $(L^2(\Omega), \langle\cdot,\cdot\rangle)_{L^2(\Omega)}$ and $(L^2(\Omega),\langle\cdot,\cdot\rangle)_\mu$ define the same topology. This inner product will be central to the preconditioning strategy developed.

\vspace{0.5cm}

\noindent We must check the axioms. Let $ u, v, w \in H $ and $ \alpha,\beta\in \mathbb{R} $.

\noindent\textbf{1.  Symmetry:}
    We need to check if $ (u, v)_\mu = (v, u)_\mu $.
   
    Since the original $ L^2 $ inner product $ (\cdot, \cdot) $ is symmetric, we have
    
    \begin{equation*}
    \begin{aligned}
    (u,v)_\mu&=\mu (P_N u, P_N v) + (P_N^\perp u, P_N^\perp v)\\[2ex]
    &=\mu (P_N v, P_N u) + (P_N^\perp v, P_N^\perp u) \\[2ex]
    &= ( v,u )_\mu.
    \end{aligned}
    \end{equation*}

\noindent    So,  it is symmetric. Note that from this property we can deduce that

    \begin{equation*}
    \begin{aligned}
        ( u,u)_\mu&=\mu ( P_N u,P_N u )_{L^2(\Omega)}+( P_N^\perp u, P_N^\perp u )_{L^2(\Omega)}\\
        &=\mu ( u,P_N u )_{L^2(\Omega)}+(  u, P_N^\perp u )_{L^2(\Omega)}\\
        &=(u, \mathscr{C} v)_{L^2(\Omega)}.
        \end{aligned}
    \end{equation*}

\noindent\textbf{2.  Linearity:}
    This follows directly from the linearity of the inner product and the linearity of the projections $ P_N $ and $ P_N^\perp $.
    
    \begin{equation*}
    \begin{aligned}
    (\alpha u + \beta v, w)_\mu &= \mu(\alpha u + \beta v, P_N w) + (\alpha u + \beta v, P_N^\perp w)  \\[2ex]
    &= \alpha[\mu (u, P_N w) + (u, P_N^\perp w)] + \beta[\mu (v, P_N w) + (v, P_N^\perp w)]  \\[2ex]
    &= \alpha (u, w)_\mu + \beta (v, w)_\mu
    \end{aligned}
    \end{equation*}
    So, it is linear.

\noindent\textbf{3.  Positive-Definiteness:}

    We need $ (u, u)_\mu \geq 0 $ for all $ u \neq 0 $.
    \begin{equation*}
    \begin{aligned}
    (u, u)_\mu &= \mu (P_N u, P_N u) + (P_N^\perp u, P_N^\perp u)= \mu \|P_N u\|^2 + \|P_N^\perp u\|^2
    \end{aligned}
    \end{equation*}
    Since $ \mu > 0 $ and norms are non-negative, $ (u, u)_\mu \ge 0 $. Furthermore, $ (u, u)_\mu = 0 $ implies $ \|P_N^\perp u\| = 0 $ and $ \|P_N u\| = 0 $. This means $ P_N u = 0 $ and $ P_N^\perp u = 0 $, which together imply $ u = 0 $.

\section[Isomorphism of C]{Isomorphism of $\mathscr{C}$}\label{iso_C}

\begin{prop}
 Let $P_N$ be the $L^2$-orthogonal projection onto the first $N$ eigenfunctions of $-\Delta$ with Dirichlet conditions. For $\mu_\lambda>0$, define $\mathscr{C}=\mu_\lambda P_N+P_N^{\perp}$. Then $\mathscr{C}: H_0^1(\Omega) \rightarrow H_0^1(\Omega)$ is a bounded linear isomorphism. 
\end{prop}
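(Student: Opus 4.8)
The plan is to show that $\mathscr{C}=\mu_\lambda P_N+P_N^\perp$ is a bounded linear bijection of $H_0^1(\Omega)$ with bounded inverse, exploiting the fact that $P_N$ and $P_N^\perp$ are spectral projections of $-\Delta$ and therefore commute with the operator in a way that respects the $H_0^1$-norm. The key observation is that $\mathscr{C}$ admits an explicit two-sided inverse, namely $\mathscr{C}^{-1}=\mu_\lambda^{-1}P_N+P_N^\perp$, so the entire proof reduces to checking that both $\mathscr{C}$ and this candidate inverse map $H_0^1(\Omega)$ boundedly into itself.

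First I would record that the eigenfunctions $\{e_i\}$, being eigenfunctions of $-\Delta$ with Dirichlet conditions, form an orthogonal system simultaneously in $L^2(\Omega)$ and (after scaling) in $H_0^1(\Omega)$: indeed $(\nabla e_i,\nabla e_j)_{L^2(\Omega)}=\tau_i(e_i,e_j)_{L^2(\Omega)}=\tau_i\delta_{ij}$. Consequently, for any $\phi=\sum_i \phi_i e_i\in H_0^1(\Omega)$ one has the equivalent characterization $\|\phi\|_{H_0^1(\Omega)}^2\simeq\sum_i(1+\tau_i)\phi_i^2$, and both $P_N$ and $P_N^\perp$ act diagonally on this expansion by truncating to $i\le N$ and $i>N$ respectively. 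In particular $P_N$ and $P_N^\perp$ are themselves bounded projections on $H_0^1(\Omega)$, since truncation never increases the weighted sum.

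Next I would compute directly that for $\phi=\sum_i\phi_i e_i$,
\begin{equation*}
\mathscr{C}\phi=\mu_\lambda\sum_{i=1}^N\phi_i e_i+\sum_{i=N+1}^\infty\phi_i e_i,
\end{equation*}
so that
\begin{equation*}
\|\mathscr{C}\phi\|_{H_0^1(\Omega)}^2\simeq\mu_\lambda^2\sum_{i=1}^N(1+\tau_i)\phi_i^2+\sum_{i=N+1}^\infty(1+\tau_i)\phi_i^2\le\max\{\mu_\lambda^2,1\}\,\|\phi\|_{H_0^1(\Omega)}^2,
\end{equation*}
which gives boundedness of $\mathscr{C}$. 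The same diagonal computation shows $\mathscr{C}(\mu_\lambda^{-1}P_N+P_N^\perp)=(\mu_\lambda^{-1}P_N+P_N^\perp)\mathscr{C}=\mu_\lambda^{-1}\mu_\lambda P_N+P_N^\perp=P_N+P_N^\perp=I$ on $H_0^1(\Omega)$, using $P_N P_N^\perp=P_N^\perp P_N=0$ and $P_N^2=P_N$. Since $\mu_\lambda^{-1}P_N+P_N^\perp$ is bounded on $H_0^1(\Omega)$ by the identical estimate with $\mu_\lambda$ replaced by $\mu_\lambda^{-1}$, this identifies it as the continuous inverse, establishing that $\mathscr{C}$ is an isomorphism.

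The only point requiring care is the equivalence $\|\phi\|_{H_0^1(\Omega)}^2\simeq\sum_i(1+\tau_i)\phi_i^2$ and the legitimacy of differentiating the series term by term; I expect this to be the main technical obstacle, though a mild one. It is handled by noting that $\{e_i/\sqrt{1+\tau_i}\}$ is an orthonormal basis of $H_0^1(\Omega)$ with respect to the inner product $(\phi,\psi)_{H_0^1}=(\phi,\psi)_{L^2}+(\nabla\phi,\nabla\psi)_{L^2}$, which follows from the orthogonality relations above together with the spectral decomposition of $\mathscr{A}$ recorded in \eqref{decomp_espec}. Once this basis is in hand, every step reduces to bounded diagonal multiplication on $\ell^2$-coefficients, and the isomorphism property is immediate.
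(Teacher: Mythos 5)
Your proposal is correct and follows essentially the same route as the paper's own proof: exhibit the explicit inverse $\mu_\lambda^{-1}P_N+P_N^\perp$ and verify that both $\mathscr{C}$ and this inverse act boundedly on $H_0^1(\Omega)$ via the diagonal action on the eigenfunction expansion. Your version is in fact slightly more careful, since you justify the $H_0^1$-orthogonality of the $e_i$ (giving the correct constant $\max\{\mu_\lambda^2,1\}$ where the paper writes $\max\{\mu_\lambda,1\}$) and you check the inverse identity on both sides.
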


\begin{proof}
In order to prove that $\mathscr{C}$ is a bounded linear isomorphism, we have to verify it is a bounded linear bijection with bounded inverse between Hilbert spaces.

It's clear linear, let's show that it is bounded. Let $y\in H_0^1(\Omega)$, arbitrary. Then

\begin{equation*}
    \begin{aligned}
        \|\mathscr{C} y\|_{H_0^1(\Omega)}^2&=\mu\sum_{i=1}^{N} \lambda_i  y_i^2+\sum_{i=N+1}^{\infty} \lambda_i y_i^2+\mu\|P_N y\|^2_{L^2(\Omega)}+\|P_N^\perp y\|^2_{L^2(\Omega)}\\[2ex]
        &\leq \max\{\mu_\lambda,1\}\|y\|_{H_0^1(\Omega)}.
    \end{aligned}
\end{equation*}

Let's define $C^{-1}: H_0^1(\Omega)\rightarrow H_0^1(\Omega)$ given by $C^{-1} y :=\frac{1}{\mu_\lambda}P_N y + P_N^\perp y$. Then 

\begin{equation*}
    \begin{aligned}
        \|\mathscr{C}^{-1}y \|_{H_0^1(\Omega)}^2&=\frac{1}{\mu_\lambda}\sum_{i=1}^{N} \lambda_i y_i^2+\sum_{i=N+1}^{\infty} \lambda_i y_i^2+\frac{1}{\mu_\lambda}\|P_N y\|_{L^2(\Omega)}^2+\|P_N^\perp y\|_{L^2(\Omega)}^2\\[2ex]
        &\leq \max\left\lbrace\frac{1}{\mu_\lambda},1\right\rbrace\|y\|_{H_0^1(\Omega)}^2.
    \end{aligned}
\end{equation*}

\noindent We also have that $\mathscr{C C}^{-1}=I$. Indeed, 

\begin{equation*}
    \begin{aligned}
        \mathscr{CC}^{-1} y &=\mathscr{C}\left( \frac{1}{\mu_\lambda} P_N y + P_N^\perp y\right)\\[2ex]
        &= P_N y +P_N^\perp y =y .
    \end{aligned}
\end{equation*}

\noindent Since both $\mathscr{C}$ and $\mathscr{C}^{-1}$ are bounded, $\mathscr{C}$ is an isomorphism of $H_0^1$. So indeed

\begin{equation*}
 w=\mathscr{C} z \in H_0^1 \quad \text { if and only if } \quad z \in H_0^1 .   
\end{equation*}
\end{proof}



\bibliographystyle{plain}
\bibliography{references}


\end{document}